\newcommand\coolunder[2]{\mathrlap{\smash{\underbrace{\phantom{%
    \begin{matrix} #2 \end{matrix}}}_{\mbox{$#1$}}}}#2}
\newtheorem{definition}{Definition}
\newtheorem{theorem}{Theorem}
\newtheorem{example}{Example}
\DeclareMathOperator{\rank}{rank}
\title{Generalized Index Coding Problem and Discrete Polymatroids}
\begin{document}
\author{Anoop Thomas and B. Sundar Rajan
}\affil{Dept. of ECE, IISc, Bangalore 560012, India, Email: $\lbrace$anoopt,bsrajan$\rbrace$@ece.iisc.ernet.in.}
\date{\today}
\maketitle
 \thispagestyle{empty}	
\begin{abstract}
The index coding problem has been generalized recently to accommodate receivers which demand functions of messages and which possess functions of messages. 
The connections between index coding and matroid theory have been well studied in the recent past. Index coding solutions were first connected to multi linear representation of matroids. For vector linear index codes discrete polymatroids which can be viewed as a generalization of the matroids was used. It was shown that a vector linear solution to an index coding problem exists if and only if there exists a representable discrete polymatroid satisfying certain conditions. In this work we explore the connections between generalized index coding and discrete polymatroids. The conditions that need to be satisfied by a representable discrete polymatroid for a generalized index coding problem to have a vector linear solution is established. From a discrete polymatroid we construct an index coding problem with coded side information and shows that if the index coding problem has a certain optimal length solution then the discrete polymatroid satisfies certain properties. From a matroid we construct a similar generalized index coding problem and shows that the index coding problem has a binary scalar linear solution of optimal length if and only if the matroid is binary representable.
\end{abstract}

\section{Introduction}
\label{Sec:Introduction}

The broadcast nature of the wireless medium is utilized by many applications such as multimedia content delivery, audio and video on-demand and ad-hoc wireless networking. The index coding problem introduced by Birk and Kol \cite{ISCO} aims to increase the throughput of wireless networks. The model considered in \cite{ISCO} involves a source which possesses a set of messages and a set of receivers which demand messages. Each receiver knows a subset of messages which is referred to as the side information. The source also knows the side information available to the receivers. It uses this knowledge to develop proper encoding techniques to satisfy the demands of the receivers at an increased throughput. An index code is a encoding scheme developed by the source to satisfy all the receivers. An encoding scheme with minimum number of transmissions which enables all the receivers to decode its demanded messages is referred to as an optimal index code.

Bar-Yossef \textit{et al.} \cite{ICSI} studied a special case of index coding problem and found that the length of the optimal linear index code is equal to the minrank of a related graph which is an NP-hard problem. Graph theory techniques were used to find the optimal index codes for certain class of index coding problems in \cite{BCSI} and \cite{OngHo}.

An instance of the conventional index coding problem involves a source which possesses all the messages and a set of receivers. Each receiver possesses a subset of messages called the side information or the \textit{Has-set} and demands another subset of messages called the \textit{Want-set}. The wireless broadcast channel is assumed to be noiseless. The source is aware of the messages possessed by each receiver and it aims to reduce the number of transmissions required to satisfy the demands of all the receivers. The conventional index coding has been generalized to functional index coding in \cite{GuptaR16a}. In a functional index coding problem, the \textit{Has-set} and the \textit{Want-set} of users contain functions of messages rather than subsets of messages. Note that the conventional index coding is a special case of the functional index coding problem. The problem with the \textit{Has-sets} being linear combinations of messages was studied in \cite{BCCSI},\cite{ICCSI} where it was called as index coding with coded side information. This was motivated by the fact that certain clients may fail to receive some coded transmissions possibly due to power outage. The clients will now possess few coded transmissions as side information and the new problem is an index coding with coded side information. Dai \textit{et al.} \cite{dai2014data} considered both the \textit{Has-sets} and \textit{Want-sets} to be linear combinations of the messages which is referred to as generalized index coding problem (GIC). 

The connection between multi-linear representation of matroids and index coding was studied in \cite{ICMT}. It was shown in \cite{LICDPM} that a vector linear solution to an index coding problem exists if and only if there exists a representable discrete polymatroid satisfying certain conditions which are determined by the index coding problem. In this work we explore the connections between the generalized index coding and discrete polymatroids. The major contributions of this paper are as follows. 

\begin{itemize}
\item We establish a connection between vector linear index code for a generalized index coding problem and a representable discrete polymatroid in Section \ref{Sec:GICDPM}. It is shown that the existence of a linear solution for a generalized index coding problem is connected to the existence of a representable discrete polymatroid satisfying certain conditions determined by the generalized index coding problem.
\item From a discrete polymatroid we construct a generalized index coding problem and show that if the generalized index coding problem has a vector linear solution of optimal length over the binary field then the discrete polymatroid is representable over the binary field. An example to illustrate that the converse of the above result is not true is also provided.
\item A generalized index coding problem is constructed from matroids and it is shown that the constructed problem has a binary scalar linear solution if and only if the matroid is binary representable. Also, it is shown that certain generalized index coding problems do not have a binary scalar linear solution of optimal length using the above result.
\end{itemize}

The organization of the paper is as follows. In Section \ref{Sec:FIC} we review the definitions of functional index coding. In Section \ref{Sec:MatandDPM}, basic results of matroids and discrete polymatroids are reviewed. In Section \ref{Sec:GICDPM} the connections between generalized index coding and discrete polymatroids are established. In Section \ref{Sec:GICfromDPM} a generalized index coding problem is constructed from discrete polymatroids and it is shown that the index coding problem constructed has a vector linear solution only if the discrete polymatroid is representable. In Section \ref{Sec:MICCSI}, we construct an index coding with coded sided information problem from matroids and show that the constructed index coding problem has a binary scalar linear solution if and only if the matroid is binary representable. We conclude and summarize the results in Section \ref{Sec:Conclusion}.

\textit{Notations:} The set $\{1,2,\ldots,m \}$ is denoted as $\lceil m \rfloor$ and $\mathbb{Z}_{\geq 0}$ denote the set of non-negative integers. A vector of length $r$ whose $i^{\text{th}}$ component is one and all others components are zeros is denoted as $\epsilon_{i,r}$. For a vector $v$ of length $r$ and $A \subseteq \lceil r \rfloor, v(A)$ is the vector obtained by taking only the components of $v$ indexed by the elements of $A$. For $u,v \in \mathbb{Z}_{\geq 0}^r, u \leq v$ if all the components of $v-u$ are non-negative and $u<v$ if $u \leq v$ and $u \neq v$. For a set $S$, $\vert S \vert$ denotes the cardinality of the set $S$ and for a vector $v \in \mathbb{Z}_{\geq 0}^r, \vert v \vert$ denotes the sum of components of $v$. For $u,v \in \mathbb{Z}_{\geq 0}, u \vee v$ is the vector whose $i^{\text{th}}$ component is the maximum of the $i^{\text{th}}$ components of $u$ and $v$. For a vector $v \in \mathbb{Z}_{\geq 0}, (v)_{>0}$ denotes the set of indices corresponding to the non-zero components of $v$. For a matrix $M$, $M_{i}$ denotes the $i^{\text{th}}$ column of matrix $M$ and for a set $S$, $M_S$ denotes the submatrix obtained by concatenating the columns of $M$ indexed by the set $S$.

\section{Functional Index Coding}
\label{Sec:FIC}

An index coding problem $\mathcal{I}(X,\mathcal{R})$ includes 
\begin{itemize}
\item a set of messages $X=\lbrace x_1,x_2,\ldots,x_m \rbrace$ and
\item a set of receiver nodes $\mathcal{R} \subseteq \lbrace (x, H); x \in X, H \subseteq X \setminus \{x \} \}$.
\end{itemize}

For a receiver node $R=(x, H) \in \mathcal{R}$, $x$ denotes the message demanded by $R$ and $H$ denotes the side information possessed by $R$. Each one of the messages $x_i, i \in \lceil m \rfloor $ belongs to finite field $\mathbb{F}_q^n$.

An index code over $\mathbb{F}_q$ of length $l$ and dimension $n$ for the index coding problem $\mathcal{I}(X, \mathcal{R})$ is a function $f : \mathbb{F}_q^{mn} \rightarrow \mathbb{F}_q^l$, which satisfies the following condition. For every receiver $R=(x,H) \in \mathcal{R}$, there exists a function $\psi_R : \mathbb{F}_q^{n \vert H \vert + l} \rightarrow \mathbb{F}_q^n$ such that $\psi_R((x_i)_{i \in H},f(y))=x, \forall y \in \mathbb{F}_q^{mn}$. The function $\psi_R$ is referred to as the decoding function at receiver $R$. An index coding solution for which $n=1$ is called scalar solution and if $n > 1$ it is called a vector solution. An index code is called linear if the function $f$ is linear.

The index coding problem was generalized to functional index coding problem in \cite{GuptaR16a}. In functional index coding problem the side information and the demands of the receivers may be functions of messages rather than only a subset of the messages. The information possessed by the receivers is described by a \textit{Has-set} which consists of functions of messages. The demands of the receiver are described by a \textit{Want-set}. Each receiver $R_{i}$ is described by a tuple $(\mathcal{W}_{i}, \mathcal{H}_{i})$, where $\mathcal{W}_i,\mathcal{H}_{i}$ are sets of functions from $\mathbb{F}_{q}^{mn}$ to $\mathbb{F}_q$.

In this paper we consider those generalized index coding problems for which the functions demanded and possessed by the receivers are linear combinations of the messages. 

\begin{definition}
An instance $\mathcal{I}(X,\mathcal{R})$ of a generalized index coding problem comprises of\\
1) A source equipped with the message vector $X=(x_{1},x_{2},\ldots,x_{m})$, where $x_{i} \in \mathbb{F}_q^n, \forall ~ i \in \lceil m \rfloor$.  \\
2) A set of clients or receivers $\mathcal{R}= \lbrace R_{1},R_{2},\ldots,R_{\vert \mathcal{R} \vert} \rbrace$, where $R_{i}=(\mathcal{W}_i,\mathcal{H}_i)$ for all $R_{i} \in \mathcal{R}$. For any receiver $R_{i},\mathcal{H}_{i}= \lbrace h_{i,1}(X),h_{i,2}(X),\ldots,h_{i,\vert \mathcal{H}_{i} \vert }(X) \rbrace$ is the \textit{Has-set} where $h_{i,j} : \mathbb{F}_{q}^{mn} \rightarrow \mathbb{F}_{q}$ for $1 \leq j \leq \vert \mathcal{H}_{i} \vert$ and $\mathcal{W}_{i}=\lbrace w_{i,1}(X),w_{i,2}(X),\ldots,w_{i,\vert \mathcal{W}_{i} \vert }(X) \rbrace$ is the \textit{Want-set} where $w_{i,k} : \mathbb{F}_{q}^{mn} \rightarrow \mathbb{F}_{q}$ for $1 \leq k \leq \vert \mathcal{W}_{i} \vert$.
\end{definition}

Since the functions in the \textit{Has-set} of a receiver $R_{i}$ are linear it can be represented by vectors. Each function $h_{i,j} \in \mathcal{H}_{i}$ can be expressed as the inner product  $h_{i,j}(X)= X.K_{i,j}$ where $K_{i,j} \in \mathbb{F}_{q}^{mn}$. For the receiver $R_{i}$ we have $\vert \mathcal{H}_{i} \vert$ functions in the \textit{Has-set} each represented by a vector $K_{i,j}, 1 \leq j \leq \vert \mathcal{H}_{i} \vert$. All the functions in the \textit{Has-set} of receiver $R_{i}$ can be represented by a {\it knowledge matrix} $K_{i} \in \mathbb{F}_{q}^{mn \times \vert \mathcal{H}_{i} \vert}$. Note that $K_{i}=[K_{i,1},K_{i,2},\ldots,K_{i,\vert \mathcal{H}_{i} \vert}]$. Similarly the demand functions in  $\mathcal{W}_{i}$  can be represented by {\it demand vectors}.  Each function $w_{i,j} \in \mathcal{W}_{i}$ can be expressed as $w_{i,j}(X)= X.D_{i,j}$ where $D_{i,j} \in \mathbb{F}_{q}^{mn}$ and all the functions in the \textit{Want-set} of receiver $R_{i}$ can be described by the $mn \times \vert \mathcal{W}_{i} \vert$ {\it demand matrix} $D_{i}=[D_{i,1},D_{i,2},\ldots,D_{i,\vert \mathcal{W}_{i} \vert}]$.

%

An index code over $\mathbb{F}_q$ of length $l$ and dimension $n$ for the generalized index coding problem $\mathcal{I}(X, \mathcal{R})$ is a function $f : \mathbb{F}_q^{mn} \rightarrow \mathbb{F}_q^l$, which satisfies the following condition. For every receiver $R_{i}=(\mathcal{W}_i,\mathcal{H}_{i}) \in \mathcal{R}$, there exists a function $\psi_{R_{i}} : \mathbb{F}_q^{ \vert \mathcal{H} \vert + l } \rightarrow \mathbb{F}_q^{\vert \mathcal{W}_{i} \vert}$ such that $\psi_{R_{i}}(X.K_{i},f(X))=X.D_{i}, \forall X \in \mathbb{F}_q^{mn}$. The definitions of linearity, scalar and vector index codes remains same as that of conventional index codes.

When the index code $f$ for a generalized index coding problem is linear it can be described as $f(X)=XL , \forall X \in \mathbb{F}_{q}^{mn}$, where $L$ is  a matrix of order $mn \times l$ over $\mathbb{F}_{q}$. The matrix $L$ is called as the matrix corresponding to the linear index code $f$ and the code $f$ is referred to as the linear index code based on $L$.

For an index coding problem $\mathcal{I}(X,\mathcal{R}),$ define
$\displaystyle{\mu (\mathcal{I}(X,\mathcal{R}))}$ as the maximum number of receivers having the same \textit{Has-set}. The length $l$ and dimension $n$ of an index coding solution for the index coding problem $\mathcal{I}(X,\mathcal{R})$ satisfy the condition $l/n \geq \mu (\mathcal{I}(X,\mathcal{R}))$ \cite{ICMT}. 
\begin{definition}[\cite{ICMT}]
An index coding solution for which $l/n = \mu (\mathcal{I}(X,\mathcal{R}))$ is defined to be a perfect index coding solution.
\end{definition}

\begin{example}
\label{Eg:GIC1}
Consider the generalized index coding problem with the message vector $X=[x_1 ~x_2 ~\ldots~x_5], x_{i} \in \mathbb{F}_2.$  There are five receivers $R_1=(x_1,\{x_2 \}),R_2=(x_2,\{x_1+x_5\}),R_3=(x_3, \{x_1,x_4\}),R_4=(x_4,\{x_1+x_2+x_3\})$ and $R_5=(x_5+x_4+x_3, \{ x_2, x_1+x_3 \})$. Consider receiver $R_5=(\mathcal{W}_5, \mathcal{H}_5). $ The knowledge matrix $K_5$ and demand matrix $D_5$ are as given below. 
 \begin{equation*}
   K_5=\begin{bmatrix}
          0 & 1 \\
          1 & 0\\
          0 & 1\\
          0 & 0 \\
          0 & 0
        \end{bmatrix},
   D_5=\begin{bmatrix}
          0 \\
          0 \\
          1 \\
          1 \\
          1
        \end{bmatrix}.
 \end{equation*} The source can satisfy the demands of all the receivers by transmitting three messages $x_1+x_2,x_3+x_4$ and $x_5$. The index code is linear and is described by the matrix  \begin{equation*}
   L=\begin{bmatrix}
          1 & 0 & 0 \\
          1 & 0 & 0 \\
          0 & 1 & 0 \\
          0 & 1 & 0 \\
          0 & 0 & 1
       \end{bmatrix}.
        \end{equation*}
\end{example}

\section{Matroids and Discrete Polymatroids}
\label{Sec:MatandDPM}

\subsection{Matroids}
\label{Sec:Matroids}
In this subsection we list few basic definitions and results from matroid theory. For a comprehensive treatment, the readers are referred to \cite{We,Ox}.

\begin{definition}
\label{def:Matroid}
Let $E$ be a finite set. A matroid $\mathcal{M}$ on $E$ is an ordered pair $(E,\mathcal{I})$, where the set $\mathcal{I}$ is a collection of subsets of $E$ satisfying the following three conditions
\begin{list}{}
\item (I1) $\phi \in \mathcal{I}$
\item (I2) If $X \in \mathcal{I}$ and $X' \subseteq X$, then $X' \in \mathcal{I}$.
\item (I3) If $X_{1}$ and $X_{2}$ are in $\mathcal{I}$ and $|X_{1}| < |X_{2}|$, then there is an element $e \in X_{2} - X_{1}$ such that $X_{1} \cup e \in \mathcal{I}$.
\end{list}
\end{definition}

The set $E$ is called the \textit{ground set} of the matroid and is also referred to as $E(\mathcal{M})$. The members of set $\mathcal{I}$ are called the independent sets of $\mathcal{M}$. Independent sets are also denoted by $\mathcal{I}(\mathcal{M})$. A maximal independent subset of $E$ is called a \textit{basis} of $\mathcal{M}$ and the set of all bases of $\mathcal{M}$ is denoted by $\mathcal{B}(\mathcal{M})$. A minimal dependent set $C \subseteq E$ is referred to as a \textit{circuit}. The set of all circuits of matroid $\mathcal{M}$ is denoted by $\mathfrak{C}(\mathcal{M})$. With $\mathcal{M}$, a function called the \textit{rank} function is associated, whose domain is the power set of $E$ and codomain is the set of non-negative integers. The rank of any $X \subseteq E$ in $\mathcal{M}$, denoted by $r_{\mathcal{M}}(X)$ is defined as the maximum cardinality of a subset $X$ that is a member of $\mathcal{I}(\mathcal{M})$.  The rank of matroid is the rank of its ground set.

The rank function of the matroid satisfies the following properties. 
\begin{itemize}
\item[(R1)] $r_{\mathcal{M}}(X) \leq \vert X \vert$, for all $X \subseteq E$.
\item[(R2)] $r_{\mathcal{M}}(X) \leq r_{\mathcal{M}}(Y)$, for all $X \subseteq Y \subseteq E$.
\item[(R3)] $r_{\mathcal{M}}(X \cup Y) + r_{\mathcal{M}}(X \cap Y) \leq r_{\mathcal{M}}(X) + r_{\mathcal{M}}(Y)$, for all $X,Y \subseteq E$.
\end{itemize}

Note that the rank of an independent set is equal to the cardinality of the independent set. A matroid is fully described by its rank function and a matroid $\mathcal{M}$ on ground set $E$ with rank function $r_{\mathcal{M}}$ is denoted as $\mathcal{M}(E,r_{\mathcal{M}})$.

A matroid $\mathcal{M}$ is said to be representable over $\mathbb{F}_q$ if there exists one-dimensional vector subspaces $V_1,V_2, \dotso V_{\vert E \vert}$ of a vector space $V$ such that $\dim(\sum_{i \in X} V_i)=r_{\mathcal{M}}(X), \forall X \subseteq E$ and the set of vector subspaces $V_i, i \in \lceil \vert E \vert  \rfloor,$ is said to form a representation of $\mathcal{M}.$ The one-dimensional vector subspaces $V_i, i \in \lceil \vert E \vert  \rfloor,$ can be described by a matrix $A$ over $\mathbb{F}_q$ whose $i^{\text{th}}$ column spans $V_i.$ 

A matroid $\mathcal{M}$ with matrix $A$ as its representation is called the vector matroid of $A$ and is denoted by $\mathcal{M}(A)$. Each element in the ground set of $\mathcal{M}(A)$ corresponds to a column in $A$. For a subset $S$ of ground set $E(\mathcal{M})$, $A_{S}$ denotes the submatrix of $A$ with columns corresponding to the elements of ground set in $S$.  

Multi-linear representation of matroids was introduced in \cite{SiAs,Ma}. 

\begin{definition}
\label{def:n-linearrepresentation}
A matroid $\mathcal{M}$ on the ground set $E$ is said to be multi-linearly representable of dimension $n$ over $\mathbb{F}_q$ if there exist vector subspaces $V_{1},V_{2},\ldots, V_{\vert E \vert}$ of a vector space $V$ over $\mathbb{F}_{q}$ such that $\dim(\sum_{i \in X} V_i)=nr_{\mathcal{M}}(X)), \forall X \subseteq E$. The vector subspaces $V_{1},V_{2}, \ldots, V_{\vert E \vert}$ are said to form a multi-linear representation of dimension $n$ over $\mathbb{F}_{q}$ for the matroid $\mathcal{M}$. The vector subspaces $V_{i}, i \in \lceil \vert E \vert \rfloor$ can be described by matrices $M_{1}, M_{2}, \ldots, M_{\vert E \vert} \in \mathbb{M}_{\mathbb{F}}(kn,n)$, where $k$ is the rank of the matroid. Let $M$ be the matrix obtained by concatenating the matrices $M_{1},M_{2},\ldots,M_{\vert E \vert}$, $M=[M_{1} ~ M_{2} ~\ldots~ M_{\vert E \vert}] $. For every subset $X \subseteq E$, rank$(M_X)=nr_\mathcal{M}(X)$.
\end{definition}

\subsection{Discrete Polymatroids}
\label{Sec:DPM}

In this subsection we review the definitions and results from discrete polymatroids. A discrete polymatroid $\mathbb{D}$ is defined as follows:
\begin{definition}[\cite{herzog2002discrete}]
A discrete polymatroid $\mathbb{D}$ on the ground set $\lceil m \rfloor$ is a non-empty finite set of vectors in $\mathbb{Z}_{\geq 0}^m$ satisfying the following conditions:
\begin{itemize}
\item
If $u \in \mathbb{D}$ and $v <u,$ then $v \in \mathbb{D}.$ 
\item
 For all $u, v \in \mathbb{D}$ with $\vert u\vert < \vert v\vert,$
there exists $w \in  \mathbb{D}$ such that $u < w  \leq  u \vee v.$
\end{itemize}
\end{definition}

Let $2^{\lceil m \rfloor}$ denote the power set of the set $\lceil m \rfloor$. For a discrete polymatroid $\mathbb{D},$ the rank function $\rho: 2^{\lceil m \rfloor} \rightarrow \mathbb{Z}_{\geq 0}$ is defined as $\rho(A)=\max \{ \vert u(A) \vert , u \in \mathbb{D}\},$ where $\emptyset \neq A  \subseteq \lceil m \rfloor$ and $\rho(\emptyset)=0.$ Alternatively, a discrete polymatroid $\mathbb{D}$ can be written in terms of its rank function as $\mathbb{D}=\lbrace x \in \mathbb{Z}_{\geq 0}^m: \vert x(A) \vert \leq \rho(A), \forall A \subseteq \lceil m \rfloor \rbrace.$ A discrete polymatroid is completely described by the rank function. So the discrete polymatroid $\mathbb{D}$ on $\lceil m \rfloor$ is also denoted by $(\lceil m \rfloor, \rho)$. The ground set of discrete polymatroid is also denoted by $E(\mathbb{D})$. 

A function $\rho: 2^{\lceil m \rfloor} \rightarrow \mathbb{Z}_{\geq 0}$ is the rank function of a discrete polymatroid if and only if it satisfies the following conditions \cite{farras2007ideal}:
\begin{description}
\item [(D1)]
For $A \subseteq B \subseteq \lceil m \rfloor,$ $\rho(A)\leq \rho(B).$
\item [(D2)]
 $\forall A,B \subseteq \lceil m \rfloor,$  $\rho(A \cup B) + \rho(A\cap B)\leq \rho(A)+\rho(B).$
\item [(D3)]
$\rho(\emptyset)=0.$
\end{description}

A vector $u \in  \mathbb{D}$ for which there does not exist $v \in \mathbb{D}$ such that $u<v,$ is called a basis vector of $\mathbb{D}.$  Let $\mathcal{B}(\mathbb{D})$ denote the set of basis vectors of $\mathbb{D}.$ The sum of the components of a basis vector of $\mathbb{D}$ is referred to as the rank of $\mathbb{D},$ denoted by $\rho(\mathbb{D}).$ Note that $\rho(\mathbb{D})=\rho(\lceil m \rfloor)$. For all the basis vectors, sum of the components will be equal \cite{vladoiu2006discrete}. A discrete polymatroid is nothing but the set of all integral subvectors of its basis vectors.

Consider a discrete polymatroid $\mathbb{D}$ with rank function $\rho$ on the ground set $\lceil m \rfloor$. Consider the function $\rho'(X)=n \rho(X), \forall X \subseteq \lceil m \rfloor$. The function $\rho'$ satisfies the conditions (D1),(D2) and (D3). The discrete polymatroid on the ground set $\lceil m \rfloor$ with the rank function $\rho'$ is denoted by $n\mathbb{D}$.

\begin{definition}[\cite{farras2007ideal}]
A discrete polymatroid $\mathbb{D}$ on the ground set $\lceil m \rfloor$ with rank function $\rho$ is said to be representable over $\mathbb{F}_q$ if there exists vector subspaces $V_1,V_2,\dotso,V_m$ of a vector space $E$ over $\mathbb{F}_q$ such that $dim(\sum_{i \in X} V_i)=\rho(X),$ $\forall X \subseteq \lceil m \rfloor.$ The set of vector subspaces $V_i,i\in\lceil m \rfloor,$ is said to form a representation of $\mathbb{D}.$ 
A discrete polymatroid is said to be representable if it is representable over some field. $\mathbb{D}(V_{1},V_{2}, \ldots, V_{m})$ denotes a representable discrete polymatroid on $\lceil m \rfloor$ with $V_{1},V_{2},\ldots,V_{m}$ as its representation. Each $V_{i}$ can be expressed as the column span of a $\rho(\lceil m \rfloor) \times \rho(\{i\})$ matrix $A_{i}$. The concatenated matrix $A=[A_{1}~A_{2} \ldots A_{m}]$ is referred to as the \textit{representing matrix} of the discrete polymatroid $\mathbb{D}$. 
\end{definition} 

\begin{definition} \cite{LICDPM}
For a discrete polymatroid $\mathbb{D}$ with rank function $\rho$ on the ground set $\lceil m \rfloor$, a vector $u \in \mathbb{Z}_{\geq 0}^{m}$ is said to be an excluded vector if the $i^{\text{th}}$ component of  $u$ is less than or equal to  $\rho(\{i\}), \forall i \in \lceil m \rfloor$ and $u \notin \mathbb{D}$. The set of excluded vectors for the discrete polymatroid $\mathbb{D}$ is denoted by $\mathcal{D}(\mathbb{D})$.
 An excluded vector $u \in \mathcal{D}(\mathbb{D})$ is said to be a minimal excluded vector, if there does not exist $v \in \mathcal{D}(\mathbb{D})$ for which $v < u $. The set of minimal excluded vectors for the discrete polymatroid $\mathbb{D}$ is denoted by $\mathcal{C}(\mathbb{D})$.
\end{definition}

Discrete polymatroids can be viewed as a generalization of matroids \cite{LICDPM,herzog2002discrete,vladoiu2006discrete}. There is a one-to-one correspondence between the independent sets, basis sets, dependent sets and circuits of a matroid to the vectors of an associated discrete polymatroid. For a matroid $\mathcal{M}$ there is an associated discrete polymatroid $\mathbb{D}(\mathcal{M})$. Consider an independent set $I$ of the matroid $\mathcal{M}$. Corresponding to the set $I$ there exists a unique vector $\sum_{i \in I}\epsilon_{i,r} $ belonging to $\mathbb{D}(\mathcal{M})$. Discrete polymatroid $\mathbb{D}(\mathcal{M})$ can be written as $\{ \sum_{i \in I} \epsilon_{i,r} : I \in \mathcal{I}\}$ where $\mathcal{I}$ is the set of independent sets of matroid $\mathcal{M}$. 

For a basis set $B$ of a matroid $\mathcal{M}$, the vector $\sum_{i \in B}\epsilon_{i,r}$ is a basis vector of $\mathbb{D}(\mathcal{M})$ and for a basis vector $b$ of $\mathbb{D}(\mathcal{M})$, the set $(b)_{>0}$ is a basis set of $\mathcal{M}$. For a dependent set $D$ of $\mathcal{M}$, the vector $\sum_{i \in D} \epsilon_{i,r}$ is an excluded vector of $\mathbb{D}(\mathcal{M})$ and conversely for an excluded vector $d \in \mathcal{D}(\mathbb{D}(\mathcal{M}))$, the set $(d)_{>0}$ is a dependent set of $\mathcal{M}$. Similarly the set of minimal excluded vectors of $\mathbb{D}(\mathcal{M})$ and circuits of $\mathcal{M}$ are also related as follows. The set of circuits of matroid $\mathcal{M}$ is given by $\{(u)_>{0} : u \in \mathcal{C}(\mathbb{D}(\mathcal{M})) \}$. For a circuit $C$ of matroid $\mathcal{M}$ the vector $\sum_{i \in C} \epsilon_{i,r}$ is a minimal excluded vector for $\mathbb{D}(\mathcal{M})$.

\section{Generalized Index Coding Problem and Discrete Polymatroids}
\label{Sec:GICDPM}

In this section we explore the connections between generalized index coding problem and representable discrete polymatroids. Theorem \ref{thm:GICDPM} below connects the existence of a linear index code of length $l$ and dimension $n$ for a generalized index coding problem to the problem of representation of a discrete polymatroid satisfying certain conditions.  

\begin{theorem}
\label{thm:GICDPM}
A linear index code over $\mathbb{F}_{q}$ of length $l$ and dimension $n$ exists for a generalized index coding problem $\mathcal{I}(X,\mathcal{R})$ if and only if there exists a discrete polymatroid $\mathbb{D} = ( \lceil m +1 \rfloor, \rho)$ representable over $\mathbb{F}_{q}$ with $\rho(\mathbb{D})=mn$ and with $A_1,A_2,\ldots,A_{m+1},$ as the representation matrices satisfying the following conditions : \newline
(C1) $\rho(\{i\})=n, \forall i \in \lceil m \rfloor, ~~ \rho(\lceil m \rfloor)=mn$ ~ and ~ $\rho(\{m+1\})=l$. \newline
(C2) For every receiver $R_{i}=(\mathcal{W}_{i},\mathcal{H}_{i}) \in \mathcal{R}$ described by $(D_i,K_i)$, rank $([AD_i ~~ AK_i ~~ A_{m+1}])= \text{ rank }([AK_i ~~ A_{m+1}])$, where $A=[A_1 ~ A_2 \ldots A_m]$.
\end{theorem}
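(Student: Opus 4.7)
The plan is to prove both directions by a direct dictionary between linear index codes and representation matrices of discrete polymatroids: the coding matrix $L$ itself will serve as the last representation matrix $A_{m+1}$, while the first $m$ representation matrices will be standard block-indicator matrices indexing the messages.

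For the forward direction, given a linear index code $f(X)=XL$ over $\mathbb{F}_q$ with $L\in\mathbb{F}_q^{mn\times l}$, I would first reduce to the case that $L$ has $l$ linearly independent columns, since otherwise redundant transmissions can be dropped without increasing the length. I then take $A_i$ to be the $mn\times n$ matrix with $I_n$ in rows $(i-1)n+1,\ldots,in$ and zeros elsewhere for $i\in\lceil m\rfloor$, and set $A_{m+1}=L$. Their column spans $V_i\subseteq\mathbb{F}_q^{mn}$ define a representable discrete polymatroid $\mathbb{D}$ on $\lceil m+1\rfloor$ via $\rho(X)=\dim(\sum_{i\in X}V_i)$. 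Condition (C1) is then immediate: the block structure gives $\rho(\{i\})=n$ and $\rho(\lceil m\rfloor)=mn$, the rank assumption on $L$ gives $\rho(\{m+1\})=l$, and $V_{m+1}\subseteq\mathbb{F}_q^{mn}$ forces $\rho(\mathbb{D})=mn$.

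For condition (C2), my construction makes $A=[A_1\,\cdots\,A_m]=I_{mn}$, so the condition reduces to $\rank([D_i\ K_i\ L])=\rank([K_i\ L])$. The decoder $\psi_{R_i}$ in the index-code definition is not a priori required to be linear, so the key step is a null-space argument: decodability forces $X_1K_i=X_2K_i$ and $X_1L=X_2L$ to imply $X_1D_i=X_2D_i$, hence the left null space of $[K_i\ L]$ sits inside the left null space of $D_i$, which is exactly the desired rank equality. This null-space step is the one genuinely non-routine point in the proof, since it must extract a linear relation from an arbitrary decoder of a linear code.

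For the converse, suppose $\mathbb{D}$ admits a representation $A_1,\ldots,A_{m+1}$ satisfying (C1) and (C2). From (C1), the $mn\times mn$ matrix $A=[A_1\,\cdots\,A_m]$ has rank $mn$ and is hence invertible. I then set $L=A^{-1}A_{m+1}$ and take $f(X)=XL$. Condition (C2) yields matrices $U_i,V_i$ with $AD_i=AK_iU_i+A_{m+1}V_i$; left-multiplying by $A^{-1}$ gives $D_i=K_iU_i+LV_i$, so that $XD_i=(XK_i)U_i+(XL)V_i$ exhibits an explicit linear decoder for receiver $R_i$ from its side information and the received codeword. This closes the equivalence; the main obstacle throughout, as noted, is the null-space argument needed in the forward direction to cope with a potentially nonlinear decoder.
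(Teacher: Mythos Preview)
Your proof is correct and follows essentially the same construction as the paper: block-identity matrices $A_i$ together with $A_{m+1}=L$ for the forward direction, and $L=A^{-1}A_{m+1}$ with the column-span relation from (C2) for the converse. Your version is in fact more careful than the paper's on two points it glosses over---reducing $L$ to full column rank so that $\rho(\{m+1\})=l$ actually holds, and the null-space argument that extracts the rank equality in (C2) from a decoder not assumed linear.
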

\begin{proof}
First we prove the 'if' part. Consider a discrete polymatroid $\mathbb{D}$ of rank $mn$ representable over $\mathbb{F}_{q}$ with representation $A_1,A_2,\ldots, A_{m+1},$ satisfying conditions (C1) and (C2). The matrix $A$ is the concatenation of matrices $A_1,A_2,\ldots,A_{m}$. Condition (C1) implies that $A_{i}$ is $mn \times n$ matrix for $i \in \lceil m \rfloor$ and $A_{m+1}$ is $mn \times l$ matrix. From (C1) we have that $rank(A)=mn$ making it invertible. Define $A_{i}'=A^{-1}A_{i}, i \in \lceil m+1 \rfloor$. Consider the map $f: \mathbb{F}_{q}^{mn} \rightarrow \mathbb{F}_q^l$ given by $f(X)=XA_{m+1}'$. We show that the map $f$ forms an index code of length $l$ and dimension $n$ over $\mathbb{F}_{q}$. Consider any receiver $R_{i} = (\mathcal{W}_{i}, \mathcal{H}_{i})$ described by $(D_{i},K_{i})$. From (C2) we have that the column span of the matrix $AD_i$ belongs to the span of columns of $AK_i$ and $A_{m+1}$. Matrix $AD_{i}$ can be written as $[AK_i ~~ A_{m+1}]M_i$ where $M_i$ is an $(\vert \mathcal{H}_i \vert + l) \times \vert \mathcal{W}_i \vert$ matrix. Premultiplying by $A^{-1}$, we have $[K_{i} ~~ A_{m+1}']M_i=D_i$. Hence $XD_{i}$ can be obtained at receiver $R_{i}$ from $XK_{i}$ and $XA_{m+1}'$.

To prove the 'only if' part, we assume that a vector linear index code $f$ over $\mathbb{F}_{q}$ of length $l$ and dimension $n$ exists for the generalized index coding problem $\mathcal{I}(X, \mathcal{R})$. The vector linear index code $f$ can be written as $f(X)=XA_{m+1}$ where $A_{m+1}$ is a matrix of size $mn \times l$. Let $I$ be the identity matrix of size $mn \times mn$. For $i \in \lceil m \rfloor $, let $A_{i}$ be the matrix obtained by taking only the $i(n-1)+1^{th}$ to $in^{th}$ columns of $I$. Let $V_{i}$ be the column span of $A_{i}$. We claim that the discrete polymatroid $\mathbb{D}(V_1,V_2,\ldots,V_{m+1})$ satisfies the condition (C1) and (C2). Since the concatenation of matrices $A_{i}, i \in \lceil m \rfloor$ forms an identity matrix condition (C1) is satisfied. Consider a receiver $(D_i,K_i) \in \mathcal{R}$. Since the vector index code $XA_{m+1}$ satisfies the receiver, $XD_{i}$ can be obtained from $XK_{i}$ and $XA_{m+1}$. Since $A$ is identity matrix condition (C2) is satisfied.
\end{proof}

Theorem \ref{thm:GICDPM} is a generalization of the result obtained in \cite{LICDPM} where vector linear solution of a conventional index coding problem was connected to discrete polymatroids. The result in \cite{LICDPM} can be obtained from this result by imposing the restriction on the structure of matrices $D_i$ and $K_i$. For conventional index coding the only non zero entries of $D_i$ matrix will form an identity matrix and the non zero entries of $K_{i}$ matrix forms a collection of identity matrices each corresponding to the message known at the receiver. By imposing the restrictions, condition (C2) can be expressed in terms of the elements of the ground set. In the remaining part of this section, we illustrate Theorem \ref{thm:GICDPM} with an example. 

\begin{example}
Consider the generalized index coding problem of Example \ref{Eg:GIC1}. There are five messages and since the solution is scalar, dimension is one. Consider the set of matrices  \begin{equation*}
   A_1=\begin{bmatrix}
          1 \\
          0 \\
          0 \\
          0 \\
          0 
        \end{bmatrix},
           A_2=\begin{bmatrix}
          0 \\
          1 \\
          0 \\
          0 \\
          0 
        \end{bmatrix},
           A_3=\begin{bmatrix}
          0 \\
          0 \\
          1 \\
          0 \\
          0 
        \end{bmatrix},
           A_4=\begin{bmatrix}
          0 \\
          0 \\
          0 \\
          1 \\
          0 
        \end{bmatrix},
           A_5=\begin{bmatrix}
          0 \\
          0 \\
          0 \\
          0 \\
          1 
        \end{bmatrix}.
\end{equation*} Also let $A_6 = L$, the matrix corresponding to the index code of Example \ref{Eg:GIC1}. Let $V_i$ denote the column span of $A_i$ for $i \in \lceil 6 \rfloor$. The discrete polymatroid $\mathbb{D}(V_1,V_2,\ldots,V_6)$ satisfies the conditions (C1) and (C2) of Theorem \ref{thm:GICDPM}. Rank of the discrete polymatroid is equal to five since the vector spaces $V_1,V_2,\ldots,V_5$ are linearly independent. Rank of the vector space $V_6$ is equal to three which is the length of the index code. We illustrate condition (C2) for receiver $R_5$. The matrix \begin{equation*}
   AD_5=\begin{bmatrix}
          0 \\
          0 \\
          1 \\
          1 \\
          1 
        \end{bmatrix},
        AK_5=\begin{bmatrix}
          0 & 1 \\
          1 & 0\\
          0 & 1\\
          0 & 0\\
          0 & 0
        \end{bmatrix},A_6=\begin{bmatrix}
          1 & 0 & 0 \\
          1 & 0 & 0 \\
          0 & 1 & 0 \\
          0 & 1 & 0 \\
          0 & 0 & 1
       \end{bmatrix}.
        \end{equation*} Clearly $AD_5$ lies in the column span of the matrix $[AK_5 ~ A_6]$. Condition (C2) can be similarly verified for every receiver.
\end{example}

\section{Generalized Index Coding from Discrete Polymatroids}
\label{Sec:GICfromDPM}

Discrete polymatroids can be viewed as a generalization of matroids as explained in Section \ref{Sec:MatandDPM}. In \cite{ICMT}, an index coding problem was constructed from a matroid and relationship between multilinear representation of matroids and vector linear solution of the constructed index coding problem was obtained. This was generalized to the construction of index coding problems from discrete polymatroids in \cite{LICDPM}. In this section we show the construction of a generalized index coding problem from a discrete polymatroid. The construction is similar to the construction in \cite{LICDPM}. The difference is in the set of receivers constructed from minimal excluded vectors of the discrete polymatroid $\mathbb{D}$ and that these receivers possess linear functions as \textit{Has-set}. A generalized index coding problem $\mathcal{I}_{\mathbb{D}}(Z,\mathcal{R})$ is constructed from the discrete polymatroid $\mathbb{D}$. A connection between a perfect linear solution for the generalized index coding problem $\mathcal{I}_{\mathbb{D}}(Z,\mathcal{R})$ and the representability of the discrete polymatroid $\mathbb{D}$ is established in this section. 

Consider a discrete polymatroid $\mathbb{D}$ on the ground set $\lceil r \rfloor$ with rank function $\rho$ and $\rho( \lceil r \rfloor) = k $. The generalized index coding problem $\mathcal{I}_{\mathbb{D}}(Z,\mathcal{R})$ is given below.
\begin{itemize}
\item[(i)] The set of source messages $Z=X \cup Y$, where $X=\{x_1,x_2,\ldots,x_k\}$ and $Y=\{y_1^1,y_1^2,\ldots,y_1^{\rho\{1\}},\ldots,y_r^1,y_r^2,\ldots,y_r^{\rho\{r\}}\}$.

\item[(ii)] The set of receivers $\mathcal{R}$ is a union of three types of receivers $R_1,R_2$ and $R_3$ defined below. Let $\zeta_{i}= \{y_{i}^{1},y_{i}^{2},\ldots,y_{i}^{\rho(\{i \})} \}$. \newline

Receivers in $R_1$ : For a basis vector $b=\sum_{i \in \lceil r \rfloor}b_i \epsilon_{i,r} \in \mathcal{B}(\mathbb{D})$, we define the set $S_1(b)=\{(x_j,\underset{l \in (b)_{>0}}{\cup} \eta_{l}) : j \in \lceil k \rfloor, \eta_l \subseteq \zeta_l. \vert \eta_l \vert = b_l \} $. $R_1  = \underset{b \in \mathcal{B}(\mathbb{D})}{\cup}S_{1}(b)$ is the union of all such receivers for every basis of the discrete polymatroid $\mathbb{D}$.  \newline

Receivers in $R_2$ : For a minimal excluded vector $c=\sum_{i \in \lceil r \rfloor} c_i \epsilon_{i,r} \in \mathcal{C}(\mathbb{D}), j \in (c)_{>0}$ and $p \in \lceil \rho(\{j \}) \rfloor$, define the set $S_{2}(c,j,p)$ as below.  
\begin{equation*}
\begin{split}
S_2(c,j,p)= \{ (y_j^p, \underset{y \in \Gamma_1 \cup \Gamma_2}{\sum y} ): \Gamma_1=\underset{l \in (c)_{>0} \setminus\{j\}}{\cup} \eta_l, \\ \eta_l \subseteq \zeta_l, \vert \eta_l \vert = c_l, \Gamma_2 \subseteq \zeta_j \setminus \{y_j^p \}, \vert \Gamma_2 \vert = c_j - 1 \}.
\end{split}
\end{equation*}
Define $R_2 = \underset{c \in C(\mathbb{D})}{\cup} ~~ \underset{j \in (c)_{>0}}{\cup} ~~ \underset{p \in \lceil \rho(\{j\})\rfloor}{\cup} S_2(c,j,p) .$

Receivers in $R_3$ : Define $R_3= \{ (y_i^j, X) : i \in \lceil r \rfloor, j \in \lceil \rho(\{i\}) \rfloor$ \}.
\end{itemize}

Note that the minimum number of transmissions required by the above problem is $n \sum_{i \in \lceil r \rfloor} \rho ( \{i \}).$ This can be seen from the receivers in the set $R_3$. We connect the problem of representation of the discrete polymatroid $\mathbb{D}$ to the existence of a linear index coding solution of certain length for the constructed index coding problem $\mathcal{I}_\mathbb{D}(Z, \mathcal{R})$ in Theorem \ref{thm:GICfromDPM}.

\begin{theorem}
\label{thm:GICfromDPM}
If a perfect linear index coding solution of dimension $n$ over $\mathbb{F}_2$ exists for the generalized index coding problem $\mathcal{I}_\mathbb{D}(Z, \mathcal{R})$, then the discrete polymatroid $n\mathbb{D}$ is representable over $\mathbb{F}_2$.
\end{theorem}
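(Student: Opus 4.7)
The plan is to extract a representation of $n\mathbb{D}$ over $\mathbb{F}_2$ from the transmission matrix of the perfect solution. All $\sum_i\rho(\{i\})$ receivers in $R_3$ share the Has-set $X$, so $\mu\ge\sum_i\rho(\{i\})$; combined with the minimum-length lower bound $n\sum_i\rho(\{i\})$ noted before the theorem, the perfect property $l=n\mu$ pins down $l=n\sum_i\rho(\{i\})$. Write the linear code as $f(Z)=XL_x+YL_y$, where $L_y$ is the $l\times l$ block of the encoding matrix indexed by the $Y$-scalars. Each $R_3$-decoder expresses some $y$-scalar $y_i^{j,s}$ as a linear combination of $X$ and $f(Z)$; matching coefficients of $Y$ yields a vector $b\in\mathbb{F}_2^l$ with $L_y b=e_{(i,j,s)}$, so the column span of $L_y$ is all of $\mathbb{F}_2^l$ and $L_y$ is invertible.

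Right-multiply $L$ by $L_y^{-1}$ (a change of basis on transmissions that preserves decodability) to put the encoding into the normal form $f(Z)=XL_x'+Y$ for an $nk\times l$ matrix $L_x'$. Partition $L_x'=[A_1\mid A_2\mid\cdots\mid A_r]$, where $A_i$ consists of the $n\rho(\{i\})$ columns of $L_x'$ indexed by the scalar components of $y$-messages at level $i$, and set $V_i$ to be the column span of $A_i$ inside $\mathbb{F}_2^{nk}$. The claim is that $\dim\big(\sum_{i\in A}V_i\big)=n\rho(A)$ for every $A\subseteq\lceil r\rfloor$. For the lower bound, choose $u\in\mathbb{D}$ supported on $A$ with $|u|=\rho(A)$ and extend $u$ via iterated application of the second DPM axiom to a basis $b\ge u$; take $\eta_l\subseteq\zeta_l$ of size $b_l$ containing a distinguished size-$u_l$ subset for each $l\in A$. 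All receivers $(x_j,\bigcup_l\eta_l)\in S_1(b)$ must decode, and after subtracting the known $y$'s from $f(Z)$ each such receiver reads off $X\cdot(L_x')_\eta$ (the $nk\times nk$ submatrix of $L_x'$ on the $\eta$-columns); since every $x_j$ is recoverable, $(L_x')_\eta$ is invertible, and in particular the $n|u|=n\rho(A)$ columns on the distinguished subsets are linearly independent inside $\sum_{i\in A}V_i$.

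For the upper bound, let $u^*$ be a maximal element of $\mathbb{D}$ dominated by $\sum_{i\in A}\rho(\{i\})\epsilon_{i,r}$; then $u^*$ is supported on $A$ and $|u^*|=\rho(A)$. Fix $\eta^*$ with $|\eta^*_l|=u^*_l$, and let $S(\eta^*)$ denote the sum of column spans of the $n$-column blocks $P^{(l,q)}$ of $A_l$ indexed by $y_l^q\in\eta^*$. It suffices to prove that every $P^{(j,p)}$ with $j\in A$ and $p>u^*_j$ has column span inside $S(\eta^*)$, because then $\sum_{i\in A}V_i=S(\eta^*)$, of dimension at most $n|u^*|=n\rho(A)$. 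Maximality of $u^*$ forces $u^*+\epsilon_{j,r}\notin\mathbb{D}$, so some minimal excluded vector $c\le u^*+\epsilon_{j,r}$ satisfies $c_j=u^*_j+1$ and $c_l\le u^*_l$ for $l\ne j$. Apply the $R_2$-receiver $(c,j,p)$ with $\Gamma_1\subseteq\bigcup_{l\ne j}\eta^*_l$ and $\Gamma_2\subseteq\eta^*_j$ of the required sizes (possible since $p>u^*_j$ means $y_j^p\notin\eta^*_j$). Unpacking the linear decoder $y_j^p=\sigma M_1+f(Z)M_2$ and matching $Y$-coefficients forces the block-rows of $M_2$ to be $I_n$ at $(j,p)$, a single matrix $M_1$ at each $(i,q)$ with $y_i^q\in\Gamma=\Gamma_1\cup\Gamma_2$, and $0$ elsewhere; the vanishing $X$-coefficient then gives $L_x'M_2=0$, i.e., $P^{(j,p)}=\big(\sum_{y_i^q\in\Gamma}P^{(i,q)}\big)M_1$, whose column span lies inside $S(\eta^*)$ since $\Gamma\subseteq\eta^*$. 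The main obstacle is this upper-bound step: one must recognize that the scalar-Has-set $\sigma=\sum_{y\in\Gamma}y$ forces one $n\times n$ matrix $M_1$ to be shared across all $y_i^q\in\Gamma$ in the decoder, and arrange $c,\Gamma_1,\Gamma_2$ so that $\Gamma\subseteq\eta^*$ yields the required column-span containment via $L_x'M_2=0$.
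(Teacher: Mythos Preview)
Your proof is correct and follows essentially the same route as the paper's: normalize the encoding via the $R_3$ receivers to get the form $f(Z)=XL_x'+Y$, use the $R_1$ (basis) receivers to obtain the lower bound $\dim\big(\sum_{i\in A}V_i\big)\ge n\rho(A)$, and use the $R_2$ (minimal-excluded-vector) receivers to show every extra block $P^{(j,p)}$ has column span inside the span $S(\eta^*)$, giving the matching upper bound. Your treatment of the decoder equation in the upper-bound step---tracking the shared matrix $M_1$ arising from the single Has-element $\sigma$ and deducing $P^{(j,p)}=\big(\sum_{\Gamma}P^{(i,q)}\big)M_1$---is in fact a bit more explicit than the paper's corresponding passage, but the argument is the same.
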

\begin{proof}
Let $t=k+ \sum_{i=1}^r \rho(\{i \})$ denote the number of messages in the index coding problem $\mathcal{I}_\mathbb{D}(Z,\mathcal{R})$. If a perfect linear index coding solution of dimension $n$ over $\mathbb{F}_{q}$ exists for the index coding problem $\mathcal{I}_\mathbb{D}(Z, \mathcal{R})$, then from Theorem \ref{thm:GICDPM}, there exists a discrete polymatroid $\mathbb{D}'$ representable over $\mathbb{F}_{q}$ satisfying conditions (C1) and (C2). Discrete polymatroid $\mathbb{D}'$ has rank $nt$ and is over the ground set $\lceil t+1 \rfloor$. Let $V_1,V_2, \ldots,V_{t+1}$ be the vector spaces over $\mathbb{F}_q$ which forms the representation of $\mathbb{D}'$. The vector spaces $V_{i}, i \in \lceil t \rfloor$ can be expressed as the column span of matrices $A_i$ of order $nt \times n$. The vector space $V_{t+1}$ can be written as the column span of $A_{t+1}$ of order $nt \times n\sum_{i=1}^r \rho(\{i\})$. The matrix $B=[A_1,A_2,\ldots,A_t]$ is invertible from (C1). We can assume it to be identity without loss of generality. Otherwise, define $A_i'=B^{-1} A_i, i \in \lceil t+1 \rfloor$ and vector spaces given by column spans of $A_i'$ will form a representation of $\mathbb{D}'$.

The matrix $A_{t+1}$ is a $nt \times  n\sum_{i=1}^r \rho(\{i\})$ matrix and we can also assume the matrix to have a specific structure. This is because the presence of receivers belonging to $R_3$. Let $A_{t+1}=[C^T D^T]^T$ where $C$ is of order $nk \times  n\sum_{i=1}^r \rho(\{i\})$ and $D$ is of the order $ n\sum_{i=1}^r \rho(\{i\}) \times  n\sum_{i=1}^r \rho(\{i\})$. The matrix $D$ has to be full rank because of the presence of receivers in $R_3$. We can assume $D$ to be identity because if not we can define $A_{t+1}'=A_tD^{-1 }$ and it still continues to be a valid representation. Let $C_i, i \in \lceil r \rfloor$, denote the matrix obtained by taking only the $(n \sum_{j=1}^{i-1} \rho( \{j\} ) + 1 )^{\text{th}}$ to $(n \sum_{j=1}^{i} \rho({i}))^{\text{th}}$ columns of $C$. Let $C_{i,j}, j \in \lceil \rho(\{i\}) \rfloor$ denote the $nk \times n$ matrix obtained by taking the $(j-1)n + 1^{\text{th}}$ to $jn^{\text{th}}$ columns of $C_i$. Let $V_{i}'$ denote the column span of $C_{i}$ and $V'_{i,j}$ denote the column span of $C_{i,j}$. We show that the vector subspaces $V_{i}', i \in \lceil r \rfloor$ forms a representation for the discrete polymatroid $n \mathbb{D}$.

Consider a set $S \subseteq \lceil r \rfloor$. Let $b = \text{arg} \underset{b \in \mathbb{D}}{\text {max}} \vert b(S) \vert$. Let $b_i^S$ denote the $i^{th}$ component of $b^S$. Choose $b_i^S$ vector subspaces from the set $ \mathcal{V}_i = \{V_{i,j}' : j \in \lceil \rho(\{i \}) \rfloor \}$, denoted as $V'_{i,o_1},V'_{i,o_2}, \ldots, V'_{i,o_{b_i^S}}$ for every $i \in \lceil r \rfloor$. Let $\widehat{V_i} = \sum_{j \in \lceil b_i^S \rfloor} V'_{i,o_j}$ and $\widehat{C_i}= \sum_{j \in \lceil b_i^S \rfloor}C_{i,o_j}$. $\widehat{C_i}$ is a column vector which is the sum of $\vert b_i^S \vert$ column vectors of matrix $C_{i}$. From the fact that (C2) needs to be satisfied for the receivers belonging to $S_1(b_S)$, we have $dim(\sum_{i \in \lceil r \rfloor} \widehat{V_i})=n ~ rank(\mathbb{D})$. This implies that $dim( \sum_{i \in S} \widehat{V_i} ) = n \vert b^S(S) \vert$. Since the vector space $\widehat{V_i}$ is a subspace of $V_{i}'$, we have $dim(\sum_{i \in S}V_i') \leq n \rho(S).$

Let $S=\{s_1,s_2,\ldots,s_m \} \cup \{s_{m+1},s_{m+2}, \ldots s_l \}$, where $b_{s_i}^S < \rho( \{ s_i \})$ for $ i \in \lceil m \rfloor $ and $b_{s_{i}}^S = \rho(\{ s_i \})$ for $i \in \{m+1, m+2, \ldots, l \} $. Consider the vector $u = (b_{s_1}^S + 1)\epsilon_{s_1,r} + \sum_{i \in S \setminus \{s_1 \}} b_i^S \epsilon_{i,r}$. Since the vector does not belong to the discrete polymatroid it is an excluded vector. This implies that there exists a minimum excluded vector $u_m$ for which $u_m \leq u $. The $s_1^{\text{th}}$ component of $u_m$ has to be $b_{s_{1}}^S+1$. The vector $u_{m}$  can be written as $(b_{s_1}^S+1) \epsilon_{s_1,r} + \sum_{i \in S \setminus s_1} c_i^S \epsilon_{i,r},$ where $c_i^S \leq b_i^S$. Consider the receivers belonging to the set $S_2(u_m,s_1,p),$ where $ p \in \lceil \rho( \{s_1 \}) \rfloor \setminus \{o_1,o_2, \ldots, o_{b_{s_{1}}^S} \}$, it follows that 
\[C_{s_1,p}= \big( \underset{i \in (u_m)_{>0} \setminus \{s_1 \}}{\sum} \widehat{C_i}  ~ \big) + \widehat{C_{s_1}}. 
\]
%
This is true for every $ p \in \lceil \rho( \{s_1 \}) \rfloor \setminus \{o_1,o_2, \ldots, o_{b_{s_{1}}^S} \}$. Note that the vector space $V_{s_1,p}$ is the column span of matrix $C_{s_{1},p}$. It is true for any $\vert b_{s_1}^S \vert$ columns chosen in $\widehat{C_{s_1}}$. It follows that the vector space $V'_{s_1,p}$ is a subspace of $\sum_{i \in (u_m)_>0} \widehat{V_i} $ for all $p \in \rho(\{s_1\})$.
 From this, we obtain that $\sum_{p \in \lceil  \rho( \{ s_1 \})\rfloor} V'_{s_1,p} \subseteq \sum_{i \in (u_m)_>0} \widehat{V_i} \subseteq \sum_{i \in S } \widehat{V_i}$. By a similar reasoning, $V_{s_j}' \subseteq \sum_{i \in S} \widehat{V_i}, \forall j \in \lceil m \rfloor$. Since $b_{s_j}^S = \rho ( \{s_j \})$, for $j \in \{ m +1,m+2, \ldots,l \}$, we have $V_{s_{j}}'=\widehat{V_{s_{j}}}$ for $j \in \{ m+1, m+2, \ldots, l \}$. From the above facts we have $\sum_{i \in S} V_i' \subseteq \sum_{i \in S} \widehat{V_i}$. Hence, $dim(\sum_{i \in S} V_i') \leq dim( \sum_{i \in S} \widehat{V_i}) = n \rho(S)$. Thus we have established that $dim( \sum_{i \in S} V_{i}')=n \rho(S)$ for an arbitrary subset $S \subseteq \lceil r \rfloor$.
\end{proof}

In Theorem \ref{thm:GICfromDPM}, a generalized index coding problem is constructed from a discrete polymatroid and then it is shown that the discrete polymatroid is representable over the field $\mathbb{F}_2$ if a perfect linear index coding solution exists for the constructed generalized index coding problem. We illustrate the theorem in  Example \ref{Eg:TheoremGICfromDPM}. The converse of this result is however not true. In Example \ref{Eg:ConverseGICDPM}, from a binary representable discrete polymatroid we construct a generalized index coding problem for which there is no perfect linear index coding solution. 


\begin{example}
\label{Eg:TheoremGICfromDPM}
Consider the discrete polymatroid $\mathbb{D}$ on the ground set $\lceil 3 \rfloor$ with the rank function $\rho$ given by $\rho\{1\}=\rho\{2\}=1, \rho\{1,2\}=\rho\{3\}=2$ and $\rho\{1,3\}=\rho\{2,3\}=\rho\{1,2,3\}=3$. From the discrete polymatroid $\mathbb{D}$ we construct the generalized index coding problem $\mathcal{I}_\mathbb{D}(Z,\mathcal{R})$.

The set of messages possessed by source is $Z=\{x_1,x_2,x_3\} \cup \{y_1^1,y_2^1,y_3^1,y_3^2\}$. The set of receivers are constructed as in the theorem above. The set of basis vectors of the discrete polymatroid $\mathbb{D}$ is $\mathcal{B}(\mathbb{D})=\{(1,1,1),(1,0,2),(0,1,2)\}$.
We have, 
\begin{equation*}
\begin{split}
S_1((1,1,1)) & =\{(x_i,\{y_1^1,y_2^1,y_3^j\}): i \in \lceil 3 \rfloor, j \in \lceil 2 \rfloor \},\\ 
S_1 ((1,0,2)) & = \{ (x_i,\{y_1^1,y_3^1,y_3^2 \}) : i \in \lceil 3 \rfloor \} ,\\
S_1((0,1,2)) & = \{(x_i, \{y_2^1,y_3^1,y_3^2 \}) : i \in \lceil 3 \rfloor \}, 
\text{ and }\\
R_1= S_1((1, &1,1)) \cup S_1((1,0,2)) \cup S_1((0,1,2)).
\end{split}
\end{equation*}
There is only one excluded vector $(1,1,2)$. We have,
\begin{equation*}
\begin{split}
S_2((1,1,2),1,1)&=\{ (y_1^1, \{y_2^1+y_3^1+y_3^2 \}) \}, \\
S_2((1,1,2),2,1)&=\{(y_2^1, \{y_1^1+y_3^1+y_3^2 \} ) \}, \\
S_2((1,1,2),3,1)&=\{(y_3^1, \{y_1^1+y_2^1+y_3^2 \} ) \}, \\
S_2((1,1,2),3,2)&=\{(y_3^2, \{y_1^1+y_2^1 + y_3^2 \} ) \}, \text{ and } \\
R_2 & =   ~ \underset{j \in (c)_{>0}}{\bigcup} ~ \underset{p \in \lceil \rho(\{j\}) \rfloor}{\bigcup} ~ S_2((1,1,0),j,p).
\end{split}
\end{equation*}

Third set of receivers $R_3$ is a collection of four receivers $(y_1^1, X),(y_2^1, X),(y_3^1, X),(y_3^2, X)$ where $X=\{x_1,x_2,x_3\}$.
 
Note that $\mu(\mathcal{I}_\mathbb{D}(Z.\mathcal{R}))=4.$ Consider the perfect index code in which the source transmits $y_1+x_1,y_2+x_2,y_3^1+x_3$ and $y_3^2+x_1+x_2+x_3$. It can be verified that the index code satisfies the demands of all the receivers and by Theorem \ref{thm:GICfromDPM} that the discrete polymatroid $\mathbb{D}$ has a representation given by the representing matrix 
\[
 A=\begin{bmatrix}
			~~1 & ~~0 & 0 & 1 \\
			~~0 & ~~1 & 0 & 1 \\
			\coolunder{A_1}{~~0} & \coolunder{A_2}{~~0} & \coolunder{A_3}{1 & 1} \\
\end{bmatrix}.
\]
\vspace{5pt}
\end{example}

\begin{example}
\label{Eg:ConverseGICDPM}
Consider the discrete polymatroid $\mathbb{D}$ on the ground set $\lceil 3 \rfloor$ with the rank function $\rho$ given by $\rho \{1\}=\rho \{2\}=\rho \{2,3\}=2, \rho \{3\}=1$ and $\rho \{1,2\} = \rho \{1,3\}= \rho \{1,2,3 \} = 3.$ The generalized index coding problem $\mathcal{I}_{\mathbb{D}}(Z,\mathcal{R})$ constructed from the discrete polymatroid is given below. The set of messages $Z=\{x_1,x_2,x_3\} \cup \{y_1^1,y_1^2,y_2^1,y_2^2,y_3^1\}$. There are three types of receivers $R_1,R_2$ and $R_3$ which is given below.

The set of basis vectors for the problem is $\mathcal{B}(\mathbb{D})=\{(1,1,1),(1,2,0),(2,0,1),(2,1,0) \}$. We have,
\begin{equation*}
\begin{split}
S_1((1,1,1)) & =\{(x_i,\{y_1^j,y_2^k,y_3^1\}): i \in \lceil 3 \rfloor, j,k \in \lceil 2 \rfloor \},\\ 
S_1 ((1,2,0)) & = \{ (x_i,\{y_1^j,y_2^1,y_2^2 \}) : i \in \lceil 3 \rfloor, j \in \lceil 2 \rfloor \} ,\\
S_1((2,0,1)) & = \{(x_i, \{y_1^1,y_1^2,y_3^1 \}) : i \in \lceil 3 \rfloor \}, \\
S_1((2,1,0)) &= \{(x_{i},\{y_1^1,y_1^2,y_2^j \}) : i \in \lceil 3 \rfloor, j \in \lceil 2 \rfloor \} \text{ and }\\
R_1= S_1((1, &1,1)) \cup S_1((1,2,0)) \cup S_1((2,0,1)) \cup S_1((2,1,0)).
\end{split}
\end{equation*}

The set of minimal excluded vectors of $\mathbb{D}$ are $c_1= (0,2,1),c_2=(2,1,1)$ and $c_3=(2,2,0)$. We have, 
\begin{equation*}
\begin{split}
S_2(c_1,2,1)&=\{ (y_2^1, \{y_2^2+y_3^1 \}) \}, \\
S_2(c_1,2,2)&=\{(y_2^2, \{y_2^1+y_3^1 \} ) \}, \\
S_2(c_1,3,1)&=\{(y_3^1, \{y_2^1+y_2^2 \} ) \}, \\
S_2(c_2,1,1)&=\{(y_1^1, \{y_1^2+y_2^i + y_3^1 \} ) : i \in \lceil 2 \rfloor \}, \\
S_2(c_2,1,2)&=\{(y_1^2, \{y_1^1+y_2^i + y_3^1 \} ) : i \in \lceil 2 \rfloor \}, \\
S_2(c_2,2,1)&=\{(y_2^1, \{y_1^1+y_1^2 + y_3^1 \} ) \}, \\
S_2(c_2,2,2)&=\{(y_2^2, \{y_1^1+y_1^2 + y_3^1 \} ) \}, \\
S_2(c_2,3,1)&=\{(y_3^1, \{y_1^1+y_1^2 + y_2^i \} ) : i \in \lceil 2 \rfloor \}, \\
S_2(c_3,1,1)&=\{(y_1^1, \{y_1^2+y_2^1 + y_2^2 \} ) \}, \\
S_2(c_3,1,2)&=\{(y_1^2, \{y_1^1+y_2^1 + y_2^2 \} ) \}, \\
S_2(c_3,2,1)&=\{(y_2^1, \{y_1^1+y_1^2 + y_2^2 \} ) \},  \\
S_2(c_3,2,2)&=\{(y_2^2, \{y_1^1+y_1^2 + y_2^1 \} ) \} \text{ and } \\
R_2 = \underset{c \in \{c_1,c_2,c_3\}}{\bigcup} & ~ \underset{j \in (c)_{>0}}{\bigcup} ~ \underset{p \in \lceil \rho(\{j\}) \rfloor}{\bigcup} ~ S_2(c,j,p).
\end{split}
\end{equation*}

Third set of receivers $R_3$ is a collection of five receivers $(y_1^1, X),(y_1^2, X),(y_2^1, X),(y_2^2, X),(y_3^1, X)$ where $X=\{x_1,x_2,x_3\}$.

The discrete polymatroid $\mathbb{D}$ has a binary representation given by the representing matrix 
\begin{equation*}
 A=\begin{bmatrix}
          1 & 0 & ~0 & 1 ~ & ~~0 ~~~ \\
          0 & 1 & ~0 & 1 ~ & ~~0 ~~~\\
          \coolunder{A_1}{0 & 0} & \coolunder{A_2}{~1 & 1~} & \coolunder{A_3}{~1} ~~\\
\end{bmatrix}.
\end{equation*}
\vspace{5pt}

Though the discrete polymatroid has a binary representation, the generalized index coding problem $\mathcal{I}_{\mathbb{D}}(Z, \mathcal{R})$ constructed from it does not have a perfect binary solution. Suppose there exists a scalar perfect linear solution over $\mathbb{F}_{2}$. From Theorem \ref{thm:GICfromDPM}, a scalar perfect linear solution exists only if $\mathbb{D}$ is representable over $\mathbb{F}_{2}$. Every scalar perfect linear solution for $\mathcal{I}_{\mathbb{D}}(Z, \mathcal{R})$ can be written as $f(Z)=[y_1^1 ~ y_1^2 ~ y_2^1 ~ y_2^2 ~ y_3^1]A+[x_1 ~ x_2 ~ x_3]G$ where $A$ is a $5 \times 5$ matrix over $\mathbb{F}_{2}$ and $G$ is a $3 \times 5 $ matrix over $\mathbb{F}_{2}$. The matrix $A$ needs to be full rank to ensure that the receivers belonging to $R_3$ are satisfied which allows us to assume $A$ to be identity matrix. It can be shown by checking all possible solutions that there does not exist a $3 \times 5$ matrix $G$ over $\mathbb{F}_2$ which solves the generalized index coding problem. We provide an alternate proof here. Let $G_{i}$ denote the $i^{th}$ column of $G$. The first three columns of the matrix $G$ can be assumed to be the columns of $3 \times 3$ identity matrix. The column $G_5$ has to be $[1 ~ 1 ~ 1]^T$. The column $G_5$ cannot be $[1 ~ 0 ~ 0]^T, [ 0 ~ 1 ~ 0]^T$ and $[1 ~ 1 ~ 0]^T$, since $dim(V_1+V_3)=3$. If $G_5 = [0 ~ 1 ~ 1]^T$, the receivers $(x_i, \{ y_1^2,y_2^1,y_3^1 \}), i \in \lceil 3 \rfloor $  fails to decode the demands. Similarly if $G_5=[0 ~ 1 ~ 1]^T$ receivers $(x_i, \{y_1^1,y_2^1,y_3^1 \}), i \in \lceil 3 \rfloor$ fails and if $G_5=[1 ~ 0 ~ 1]^T$ the receivers $(x_i, \{ y_1^1,y_2^1,y_3^1 \}) , i \in \lceil 3 \rfloor$ fails to decode the demands. From the restrictions $dim(V_2+V_3)=2$ and $dim(V_2)=2$, the column vector $G_4$ has only two possibilities $[1 ~ 1 ~0]^T$ and $[1 ~ 1~1]^T$. If $G_4$ is equal to $[1 ~ 1 ~ 0]^T$ then receivers $(x_i, \{y_1^1, y_1^2, y_2^2 \})$ fails and if $G_4 = [1 ~ 1 ~ 1]^T$ then receivers $(x_i, \{ y_1^1.y_2^2,y_3^1 \})$ fails to decode the demands. This shows that there does not exist a perfect scalar linear solution over $\mathbb{F}_{2}$ for $\mathcal{I}_\mathbb{D}(Z, \mathcal{R})$.
\end{example}

\section{Matroids and Generalized Index Coding Problem}
\label{Sec:MICCSI}

In this section we construct a generalized index coding problem from a matroid. The construction explained in Section \ref{Sec:GICfromDPM} is more general than this since discrete polymatroids can be viewed as a generalization of matroids. However for the generalized index coding problem constructed from matroids, there is an if and only if relationship between the constructed index coding problem and the represent ability of the matroids as shown in Theorem \ref{th:GICfromMat}. The index code constructed from the matroid is similar to the construction provided in \cite{ICMT}. Receivers belonging to the set $R_2$, which are constructed from the circuits of matroid are different as explained below. 

\begin{definition} \label{def:matindex}
Given a matroid $\mathcal{\mathcal{M}}(Y,r)$ of rank $k$ over ground set
$Y=\{y_1,\dots, y_m\}$, we define a corresponding index coding with coded side information problem $\mathcal{I}_\mathcal{\mathcal{M}}(Z,\mathcal{R})$ as follows:
\begin{enumerate}
  \item $Z=Y\cup X$, where $X= \{x_1,\dots,x_k\}$,
   \item $\mathcal{R}=R_1\cup R_2\cup R_3$ where
  \begin{enumerate}
  \item  $R_1=\{ (x_i,B);  B\in \mathcal{B}(\mathcal{M}), i=1,\dots,k\}$
  \item  $R_2=\{ (y, \underset{y_j \in C\setminus\{y\}}{\sum} y_j );C\in \mathfrak{C}(\mathcal{M}), y\in C\}$
  \item $R_3=\{(y_i,X); i=1,\dots,m\}$
  \end{enumerate}
\end{enumerate}
\end{definition}

\begin{theorem}\label{th:GICfromMat}
Consider a matroid $\mathcal{M}(Y,r)$ on the ground set \mbox{$Y=\{y_1,\dots,y_m\}$}, and $\mathcal{I}_{\mathcal{M}}(Z,\mathcal{R})$ be the corresponding generalized index coding problem constructed from it. Then, the matroid $\mathcal{M}$ has a linear representation over $\mathbb{F}_2$ if and only if there exists a perfect scalar linear index code for $\mathcal{I}_{\mathcal{M}}(Z, \mathcal{R})$ over $\mathbb{F}_2$.
\end{theorem}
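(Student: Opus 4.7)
The plan is to analyze each receiver family in $\mathcal{I}_\mathcal{M}(Z,\mathcal{R})$ and show that, after a harmless change of basis on the transmission, the $R_3$ receivers force the code to take the form $t_i = y_i + g_i^T X$ for $i \in \lceil m \rfloor$, while the $R_1$ and $R_2$ receivers force the columns $g_1,\ldots,g_m$ of the resulting $k \times m$ matrix $G$ to form a binary representation of $\mathcal{M}$. Since the $R_3$ receivers all share the Has-set $X$ we have $\mu(\mathcal{I}_\mathcal{M}) \ge m$, and (a routine check shows that no Has-set in $R_1$ or $R_2$ is shared by more than $m$ receivers), so the perfect scalar length is exactly $m$.

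For the \emph{if} direction, given a $k \times m$ binary representation $A = [a_1 \mid \cdots \mid a_m]$ of $\mathcal{M}$, I would set $t_i := y_i + a_i^T X$ and verify decoding at each family: an $R_3$ receiver $(y_i,X)$ cancels $a_i^T X$ from $t_i$; an $R_1$ receiver $(x_i,B)$ extracts $\{a_j^T X : y_j \in B\}$ using its side information, then solves for $X$ because the columns indexed by $B$ span $\mathbb{F}_2^k$; an $R_2$ receiver $(y,s)$ on circuit $C$ uses the fact that binary representability forces $\sum_{y_j \in C} a_j = 0$ (the unique dependence on a circuit over $\mathbb{F}_2$), so $\sum_{y_j \in C} t_j = y + s$ and $y$ is recovered as $s + \sum_{y_j \in C} t_j$.

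For the \emph{only if} direction, I would write the code as $f(Z) = ZL$ with $L$ split into an $m \times m$ block $L_Y$ and a $k \times m$ block $L_X$. The $R_3$ receivers force $L_Y$ to be invertible, so post-multiplying by $L_Y^{-1}$ yields an equivalent valid index code with $L_Y = I_m$, reducing the code to $t_i = y_i + g_i^T X$ where $G := L_X$ has columns $g_1,\ldots,g_m$. Matching coefficients in the linear decoder of an $R_1$ receiver $(x_i, B)$ then forces $\{g_j : y_j \in B\}$ to span $\mathbb{F}_2^k$ for every basis $B$ of $\mathcal{M}$, so each such $B$ is a basis of the vector matroid $\mathcal{M}(G)$; the analogous coefficient-matching for an $R_2$ receiver $(y,s)$ on circuit $C$ forces either $g_y = 0$ or $\sum_{y_j \in C} g_j = 0$, and either disjunct makes $\{g_j : y_j \in C\}$ linearly dependent, so every circuit of $\mathcal{M}$ is dependent in $\mathcal{M}(G)$. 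The main subtlety lies precisely in this disjunctive $R_2$ conclusion, but it dissolves once both disjuncts are seen to yield the dependence that is actually needed; from there, any basis of $\mathcal{M}(G)$ (which has rank $k$) must avoid every circuit of $\mathcal{M}$ and therefore be a basis of $\mathcal{M}$, so together with the $R_1$ conclusion the two matroids share the same bases, $\mathcal{M} = \mathcal{M}(G)$, and $G$ is the sought binary representation.
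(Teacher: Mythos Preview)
Your proposal is correct and follows essentially the same architecture as the paper's proof: in both directions one reduces (via the $R_3$ receivers and an invertible change of basis on the transmission) to a code of the form $t_i = y_i + g_i^T X$, then reads off the constraints imposed by $R_1$ and $R_2$ on the matrix $G$.

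The one place where your argument genuinely differs from the paper is the endgame of the \emph{only if} direction. The paper asserts directly from the $R_2$ decoder that $g_{i_1} = \sum_{j \in C\setminus\{i_1\}} g_j$ and then computes $\mathrm{rank}(G_C)=|C|-1$ for every circuit $C$, concluding that $G$ realizes the rank function on bases and circuits. You instead carry out the coefficient-matching honestly, obtain the disjunction ``$g_y=0$ or $\sum_{j\in C} g_j=0$'', observe that either branch makes $\{g_j:y_j\in C\}$ dependent, and then finish with the clean matroid-theoretic step $\mathcal{B}(\mathcal{M})\subseteq \mathcal{B}(\mathcal{M}(G))$ (from $R_1$) and $\mathcal{B}(\mathcal{M}(G))\subseteq \mathcal{B}(\mathcal{M})$ (any basis of $\mathcal{M}(G)$ avoids all circuits of $\mathcal{M}$, hence is $\mathcal{M}$-independent of size $k$). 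This is a tidier conclusion and, in fact, your disjunctive analysis is more careful than the paper's: over $\mathbb{F}_2$ the decoder could in principle ignore the side information ($\alpha=0$), a case the paper's phrasing glosses over but which your argument handles without extra work. The trade-off is that the paper's route gives the exact rank on circuits, whereas yours only extracts dependence; for the purpose of identifying $\mathcal{M}=\mathcal{M}(G)$ your weaker conclusion is already sufficient.
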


\begin{proof}
 Let $\xi=(x_{1},\ldots,x_{k})\in \mathbb{F}_2^{k}$, and $\chi=(y_1,\ldots,y_{m},x_{1},\dots,x_{k})\in \mathbb{F}_2^{(m+k)}.$

We first assume that the matroid $\mathcal{M}$ is representable and show the existence of a perfect scalar linear index code for the index coding problem $\mathcal{I}_{\mathcal{M}}(Z, \mathcal{R})$. Let $M$ be the matrix representing the matroid $\mathcal{M}$. Since the matroid $\mathcal{M}$ is of rank $k$, matrix $M$ is a $k \times m $ matrix.

Consider the following linear map $f(\chi)=(f_1(\chi),\dots, f_m(\chi))$ where
$$f_i(\chi)=y_i+\xi M_i \in \mathbb{F}_2, i=1,\dots,m.$$ Note that $f$ is a map from $\mathbb{F}_{2}^{m+k}$ to $\mathbb{F}_{2}^m$. We show that $f$ is a perfect scalar linear index code for $\mathcal{I}_{\mathcal{M}}(Z, \mathcal{R})$. To show this we show that all the receivers are able to satisfy their demands using their \textit{Has-sets} and the transmitted messages. 
\begin{itemize}
  \item Receiver $R_{1}$ : Consider a  basis $B=\{y_{i_1},\dots,y_{i_k}\}\in\mathcal{B}(\mathcal{M})$, and let $\rho_i=(x_i,B)\in R_1$,  $i=1,\dots,k$. We have $f_{i_{j}}(\chi)=y_{i_{j}}+\xi M_{i_{j}} , j=1,2,\ldots,k$. Combining these equations we obtain 
\begin{flalign*}
\begin{split}
[f_{i_{1}}(\chi) ~ f_{i_{2}}(\chi) \ldots ~ f_{i_{k}}(\chi)]=&[y_{i_{1}} ~ y_{i_{2}} \ldots ~ y_{i_{k}}] + \\ & \xi[M_{i_{1}} M_{i_{2}} \ldots M_{i_{k}}].
\end{split}
\end{flalign*}  
Since $\{y_{i_1},\dots,y_{i_k}\}\in\mathcal{B}(\mathcal{M})$ the matrix formed by concatenation of $M_{i_{1}}, M_{i_{2}}, \ldots , M_{i_{k}}$ is invertible. Let $B=[M_{i_{1}} ~ M_{i_{2}} \ldots ~ M_{i_{k}}]$. The receivers can obtain $\xi$ using the relation that $$\xi = [f_{i_{1}}(\chi)-y_{i_{1}} ~ f_{i_{2}}(\chi)-y_{i_{2}} \ldots f_{i_{k}}(\chi)-y_{i_{k}}]B^{-1}.$$

  \item Receiver $R_{2}$ : Let $C=\{y_{i_1},\dots,y_{i_c}\}\in \mathfrak{C}(\mathcal{M})$ and $\rho=(y_{i_1},\underset{y_j \in C\setminus\{y_{i_{1}}\}}{\sum} y_j )\in R_2$. Let $C'=C \setminus y_{i_1}$. We have $f_{i_{j}}(\chi)=y_{i_{j}}+\xi M_{i_{j}} , j=1,2,\ldots,c$. From this we can establish the relation 
\begin{flalign*}
\begin{split}
f_{i_{2}}(\chi) + \ldots + f_{i_{c}}(\chi)=&y_{i_{2}}+\ldots + y_{i_{c}} + \\ & \xi(M_{i_{2}}+\ldots+M_{i_{c}}).
\end{split}
\end{flalign*}  
Since the matroid is representable over a binary field we have $M_{i_{1}}=M_{i_{2}}+M_{i_3}+\ldots + M_{i_{c}}$. Receiver can decode its demanded message $y_{i_{1}}$ using the relation 
\begin{flalign*}
\begin{split}
y_{i_{1}} = (f_{i_{1}}(\chi)+ f_{i_{2}}(\chi) & +\ldots+f_{i_{c}}(\chi)) + \\ & (y_{i_{2}}+ \ldots+y_{i_{c}}).
\end{split}
\end{flalign*}
In a similar way all receivers belonging to $R_{2}$ can decode their demanded messages.
 \item Receiver $ R_{3}$ : For all $\rho=(y_i,X)\in R_3$, receivers can obtain its demanded message using the relation $y_{i}=f_{i}(\chi)-\xi M_i$.
\end{itemize}
The index code is clearly linear and also $\mu(\mathcal{I}_{\mathcal{M}}(Z,\mathcal{R}))=m$. Hence the code defined by the map $f$ is a perfect linear index code. 

Now, suppose that there exists a perfect scalar linear index code for $\mathcal{I}_{\mathcal{M}}(Z,\mathcal{R})$. We have to show that this will induce a linear representation of the matroid $\mathcal{M}$  over $\mathbb{F}_2$.

Let $g: (\mathbb{F}_2)^{m+k} \longrightarrow (\mathbb{F}_2)^m$ be a perfect scalar linear index code for $\mathcal{I}_{\mathcal{M}}(Z,\mathcal{R})$ over the field $\mathbb{F}_2$. The index code $g$ can be written as 
\[ g( \chi)=[x_1 ~ x_2 \ldots x_k] A+ [y_1 ~ y_2 \ldots y_m]B
\] where $A$ is a $k \times m$ matrix and $B$ is a $m \times
m$ matrix over $\mathbb{F}_{2}$.


Since all the receivers belonging $R_{3}$ are satisfied by the index code the matrix $B$ is invertible. Consider the function $f:
(\mathbb{F}_2)^{m+k} \longrightarrow (\mathbb{F}_2)^m$, given by $f(\chi)=g(\chi)B^{-1},
\forall \chi \in (\mathbb{F}_2)^{m+k} $. Note that  $f$ is a valid index code for $\mathcal{I}_{\mathcal{M}}(Z,\mathcal{R})$. The function $f$ can be written as 
\[
f(\chi)=[y_1 ~ y_2 \ldots y_m]+ [x_1 x_2 \ldots x_k]C,
\]
where $C$ is a $k \times m$ matrix over $\mathbb{F}_2$. We show that the matrix $C$ represents the matroid $\mathcal{M}$. We need to show that rank$(C_X)= r(X)$ for all $X \subseteq \lceil m \rfloor$. It suffices to show for all subsets which forms the bases and circuits of matroid $\mathcal{M}$.

Let $B\in \mathcal{B}(\mathcal{M})$ a basis. Then the receivers $(x_j,B), j=1,\dots,k$, belonging to $R_1$ will be able to decode their required messages if and only if $C_B$ is invertible. Therefore, $\rank(C_B)=k= r(B).$

Consider a circuit  $S\in \mathfrak{C}(\mathcal{M})$. Consider a receiver $ \rho \in R_2$ which demands a message corresponding to an element in circuit and possess the sum of messages corresponding to the remaining elements in circuit. Let the receiver $\rho=(y_{i_1},\underset{y_j \in S \setminus\{y_{i_1}\}}{\sum} y_j )\in R_2$. The existence of the binary linear decoding function for the receiver implies that $C_{i_1}= \underset{i \in S \setminus \{y_{i_{1}}\}}{\sum} C_{i}.$ Consider the set $S'=S \setminus y_{i_1}$. The set $S'$ is an independent set of matroid $\mathcal{M}$ since $S$ is a circuit. The set $S'$ can be extended to a basis from which it follows that $rank(C_{S'})= \vert S' \vert$. We also have rank$(C_S)=$ rank$(C_{S'})$ from which we obtain that rank$(C_S)= \vert S \vert - 1= r(S)$. Since the circuit and the demanded element of the circuit was chosen arbitrarily it completes the proof. 
\end{proof}

Theorem \ref{th:GICfromMat} shows the existence of a relationship between binary representability of matroids and the solution to certain index coding problems. We use this to show that not every generalized index coding problem has a binary solution in Example \ref{Eg:U24}. Other examples are also provided which illustrates the theorem.

\begin{example}
\label{Eg:U23}
 The uniform matroid $U_{2,3}$ is defined on a ground set $Y=\{y_1,y_2,y_3\}$ of three elements, such that $\forall I\subseteq Y$ and $|I|\leq 2,r(I)=|I|$, and $r(Y)=2$. Consider a binary linear representation  of $U_{2,3}$ :
  \begin{equation*}
   M=\begin{bmatrix}
          1 & 0 & 1\\
          0 & 1 & 1\\
      \end{bmatrix}.
\end{equation*}
 The index coding with coded side information problem corresponding to this matroid has the source messages set $\chi = \lbrace y_{1},y_{2},y_{3},x_{1},x_{2} \rbrace$,  where each message belongs to the finite field $\mathbb{F}_{2}$. There are three sets of receivers and they are given below.

\begin{itemize}
\item Receivers in $R_{1}$ : $\lbrace x_{1}, \lbrace y_{1},y_{2} \rbrace \rbrace ,\lbrace x_{2}, \lbrace y_{1},y_{2} \rbrace \rbrace $, $\lbrace x_{1}, \lbrace y_{1},y_{3} \rbrace \rbrace , \lbrace x_{2}, \lbrace y_{1},y_{3} \rbrace \rbrace , \lbrace x_{1}, \lbrace y_{2},y_{3} \rbrace \rbrace $, $\lbrace x_{2}, \lbrace y_{2},y_{3} \rbrace \rbrace $.
\item Receivers in $R_{2}$ : $\lbrace y_{1} , \lbrace y_{2}+y_{3} \rbrace \rbrace,\lbrace y_{1} , \lbrace y_{2}+y_{3} \rbrace \rbrace$,$\lbrace y_{1} , \lbrace y_{2}+y_{3} \rbrace \rbrace$
\item Receivers in $R_{3}$ : $\lbrace y_{1}, \lbrace x_{1},x_{2} \rbrace \rbrace$, \\ $\lbrace y_{2}, \lbrace x_{1},x_{2} \rbrace \rbrace$,$\lbrace y_{3}, \lbrace x_{1},x_{2} \rbrace \rbrace$
\end{itemize}
The perfect linear index coding solution for the index coding problem is given by the map $f : \mathbb{F}_{2}^{5} \rightarrow \mathbb{F}_{2}^{3}$ given by 
\[f(\chi)=[y_1 ~ y_2 ~ y_3] + [x_1 ~ x_2]M
\]. The index code is as follows. 

\begin{itemize}
\item  $c_{1}=y_{1}+x_{1}$
\item $ c_{2}=y_{2} + x_{2}$
\item $ c_{3}=y_{3} + x_{1} + x_{2}$
\end{itemize}

It can be verified that all the receivers are able to decode its demands using the transmissions and the \textit{Has-sets} available to it. Decoding procedure at receivers is given in Tables \ref{Tab:DecodeUniform1},  \ref{Tab:DecodeUniform2} and \ref{Tab:DecodeUniform3}.

\begin{table}[ht]
\centering{}
\setlength\extrarowheight{2pt}
\caption{Decoding procedures for receivers in $R_{1}$ of Example \ref{Eg:U23}.}
\begin{tabular}{|c|c|}
\hline
Receivers in $R_{1}$& Decoding Procedure \tabularnewline
\hline
$\lbrace x_{1}, \lbrace y_{1},y_{2} \rbrace \rbrace  $  & $c_{1}+y_{1}$ 
\tabularnewline
\hline
$\lbrace x_{2}, \lbrace y_{1},y_{2} \rbrace \rbrace  $ & $c_{2}+y_{2}$ \tabularnewline
\hline
$\lbrace x_{1}, \lbrace y_{1},y_{3} \rbrace \rbrace  $ & $c_{1}+y_{1}$ \tabularnewline
\hline
$\lbrace x_{2}, \lbrace y_{1},y_{3} \rbrace \rbrace  $ & $c_{3}+ c_1 + y_3 + y_1$ \tabularnewline
\hline
$\lbrace x_{1}, \lbrace y_{2},y_{3} \rbrace \rbrace  $ &  $c_{3} + c_{2} + y_{3}+y_2$ \tabularnewline
\hline
$\lbrace x_{2}, \lbrace y_{2},y_{3} \rbrace \rbrace  $ &  $c_2+y_2$ \tabularnewline
\hline
\end{tabular}
\label{Tab:DecodeUniform1}
\end{table}

\begin{table}[ht]
\centering{}
\setlength\extrarowheight{2pt}
\caption{Decoding procedures for receivers in $R_{2}$ of Example \ref{Eg:U23}.}
\begin{tabular}{|c|c|}
\hline
Receivers in $R_{2}$& Decoding Procedure \tabularnewline
\hline
$\lbrace y_{1}, \lbrace y_{2}+y_{3} \rbrace \rbrace  $  & $y_{2}+y_{3}+c_{1}+c_{2}+c_{3}$ \tabularnewline
\hline
$\lbrace y_{2}, \lbrace y_{1}+y_{3} \rbrace \rbrace  $ & $y_{1}+y_{3}+c_{1}+c_{2}+c_{3}$ \tabularnewline
\hline
$\lbrace y_{3}, \lbrace y_{1}+y_{2} \rbrace \rbrace  $ & $y_{1}+y_{2}+c_{1}+c_{2}+c_{3}$ \tabularnewline
\hline
\end{tabular}
\label{Tab:DecodeUniform2}
\end{table} 

\begin{table}[ht]
\centering{}
\setlength\extrarowheight{2pt}
\caption{Decoding procedures for receivers in $R_{3}$ of Example \ref{Eg:U23}.}
\begin{tabular}{|c|c|}
\hline
Receivers in $R_{3}$& Decoding Procedure \tabularnewline
\hline
$\lbrace y_{1}, \lbrace x_{1},x_{2} \rbrace \rbrace  $  & $x_{1}+c_{1}$ \tabularnewline
\hline
$\lbrace y_{2}, \lbrace x_{1},x_{2} \rbrace \rbrace  $ & $x_{2}+c_{2}$ \tabularnewline
\hline
$\lbrace y_{3}, \lbrace x_{1},x_{2} \rbrace \rbrace  $ & $x_{1}+x_{1}+c_{3}$ \tabularnewline
\hline
\end{tabular}
\label{Tab:DecodeUniform3}
\end{table}
\end{example}

\begin{example}
\label{Eg:U24}
Consider the following index coding problem with coded side information $\mathcal{I}(Z,\mathcal{R})$:

The set of messages $Z= \{y_1,y_2,y_3,y_4,x_1,x_2 \}$.

The set of receivers are given below. 
\begin{itemize}
\item Receivers in $R_{1}$ : 

$\lbrace (x_{i}, \lbrace y_{1},y_{2} \rbrace ), i \in \lceil 2 \rfloor \rbrace, \lbrace (x_{i}, \lbrace y_{1},y_{3} \rbrace ), i \in \lceil 2 \rfloor \rbrace$,
$\lbrace (x_{i}, \lbrace y_{1},y_{4} \rbrace ), i \in \lceil 2 \rfloor \rbrace$,
$\lbrace (x_{i}, \lbrace y_{2},y_{3} \rbrace ), i \in \lceil 2 \rfloor \rbrace$,
$\lbrace (x_{i}, \lbrace y_{2},y_{4} \rbrace ), i \in \lceil 2 \rfloor \rbrace$,
$\lbrace (x_{i}, \lbrace y_{3},y_{4} \rbrace ), i \in \lceil 2 \rfloor \rbrace$.
\item Receivers in $R_{2}$ : $\lbrace y_{1} , \lbrace y_{2}+y_{3} \rbrace \rbrace,\lbrace y_{2} , \lbrace y_{1}+y_{3} \rbrace \rbrace$,$\lbrace y_{3} , \lbrace y_{1}+y_{3} \rbrace \rbrace$,$\lbrace y_{1} , \lbrace y_{2}+y_{4} \rbrace \rbrace$,$\lbrace y_{2} , \lbrace y_{1}+y_{4} \rbrace \rbrace$, $\lbrace y_{4} , \lbrace y_{1}+y_{2} \rbrace \rbrace$,$\lbrace y_{1} , \lbrace y_{3}+y_{4} \rbrace \rbrace$,$\lbrace y_{3} , \lbrace y_{1}+y_{4} \rbrace \rbrace$,$\lbrace y_{4} , \lbrace y_{1}+y_{3} \rbrace \rbrace$,$\lbrace y_{2} , \lbrace y_{3}+y_{4} \rbrace \rbrace$,$\lbrace y_{3} , \lbrace y_{2}+y_{4} \rbrace \rbrace$ and $\lbrace y_{4}, \lbrace y_{2}+y_{3} \rbrace \rbrace$
\item Receivers in $R_{3}$ : $\lbrace (y_{i}, \lbrace x_{1},x_{2},x_{3},x_{4}), i \in \lceil 4 \rfloor \rbrace \rbrace$.
\end{itemize}

The above index coding problem is constructed from the uniform matroid $U_{2,4}$. The uniform matroid is defined on a ground set $Y=\{y_1,y_2,y_3,y_4\}$  such that $\forall I\subseteq Y$ and $|I|\leq 2,r(I)=|I|$, and $r(Y)=2$. The matroid $U_{2,4}$ does not have a binary representation. The matroid has a representation over ternary field $GF(3)$ :
 \begin{equation*}
V_1=\begin{bmatrix}
1  \\
0  \\
\end{bmatrix},
V_2=\begin{bmatrix}
0  \\
1  \\
\end{bmatrix},
V_3=\begin{bmatrix}
1  \\
1  \\
\end{bmatrix},
V_4=\begin{bmatrix}
1  \\
2  \\
\end{bmatrix}.
\end{equation*}

It can be verified that the above generalized index coding problem does not have a perfect scalar binary linear solution as implied by Theorem \ref{th:GICfromMat}. Since the matroid does not have a linear representation over binary the generalized index coding problem constructed from it does not have a scalar perfect linear solution.
\end{example}

\begin{example}
\label{Eg:Hamming}
Consider the generator matrix $G= \begin{bmatrix}
          1 & 0 & 0 & 0 & 0 & 1 & 1\\
          0 & 1 & 0 & 0 & 1 & 0 & 1\\
          0 & 0 & 1 & 0 & 1 & 1 & 0 \\
          0 & 0 & 0 & 1 & 1 & 1 & 1\\
        \end{bmatrix} $ of a $[7,4,3]$ Hamming code. The vector matroid of $G$ is a matroid having a ground set $Y=\{y_{1},y_2,\ldots,y_7 \}$. Rank of the matroid $\mathcal{M}(G)$ is four. Consider the index coding with coded side information problem $\mathcal{I}_{\mathcal{M}(G)}(Z,\mathcal{R})$  corresponding to the matroid $\mathcal{M}(G)$. 
The set of messages possessed by the source is the set $Z=\{y_1,y_2,y_3,y_4,y_5,y_6,y_7 \} \cup \{x_1,x_2,x_3,x_4 \}$. 

There are twenty eight bases to the above matroid. Each basis gives rise to four receivers in the corresponding index coding problem. The circuits of the matroid are $ \{y_1,y_2,y_4,y_7 \}$, $\{y_1,y_2,y_5,y_6 \}$, $\{y_1,y_3,y_4,y_6 \}$, $\{y_1,y_3,y_5,y_7 \}$, $\{y_2,y_3,y_4,y_5 \}$, $\{y_2,y_3,y_6,y_7 \}$ and $\{y_4,y_5,y_6,y_7\}$. Each of these circuits give rise to four receivers in the corresponding index coding problem. There are seven more receivers belonging to set of receivers $R_3=\{(y_i,\{x_1,x_2,x_3,x_4\}); i=1,\dots,7\}$. The matroid has a scalar linear representation and the corresponding index coding problem obtained from the matroid has a perfect linear solution. The length of the perfect linear index code is seven and the perfect linear index code is given by the matrix $\begin{bmatrix}
          I  \\
          G  \\
        \end{bmatrix}.$ The index code is as follows. \begin{itemize}
\item  $c_{1}=y_{1}+x_{1}$
\item $ c_{2}=y_{2} + x_{2}$
\item $ c_{3}=y_{3} + x_{3}$
\item $ c_{4}=y_{4} + x_{4}$
\item $ c_{5}=y_{5} + x_{2} + x_{3}+x_{4}$
\item $ c_{6}=y_{6} + x_{1} + x_{3} + x_{4}$
\item $c_{7}= y_{7}+x_{1}+x_{2}+x_{4}$
\end{itemize}

Details of all the receivers and the decoding procedure are given in Table \ref{Tab:DecodeHamming1},  \ref{Tab:DecodeHamming2} and \ref{Tab:DecodeHamming3}. From the tables it is clear that the above index code is a perfect scalar linear code.

\begin{table*}[h]
\centering{}
\caption{Decoding procedure for receivers in $R_{1}$ of Example \ref{Eg:Hamming}.}
\begin{tabular}{|c|c|c|c|}
\hline
Receivers & Decoding Procedure & Receivers & Decoding Procedure\tabularnewline
\hline
\multirow{ 4}{*}{ $\lbrace (x_{i}, \lbrace y_{1},y_{2},y_{3},y_{4} \rbrace ), i \in \lceil 4 \rfloor \rbrace  $ } & $c_{1}+y_{1}$ & \multirow{4}{*}{$\{(x_i,\{y_1,y_2,y_3,y_5 \}), i \in \lceil 4 \rfloor  \}$} & $c_{1}+y_{1}$\tabularnewline
& $c_{2} + y_{2}$ & & $c_{2} + y_{2}$ \tabularnewline
& $c_{3}+y_{3}$ & & $c_{3} + y_{3}$ \tabularnewline
& $c_{4}+y_{4}$ & & $c_{5}+c_2+c_3+y_2+y_3+y_5$ \tabularnewline
\hline
\multirow{ 4}{*}{ $\{(x_i,\{y_1,y_2,y_3,y_6 \}), i \in \lceil 4 \rfloor  \} $ } & $c_{1}+y_{1}$ & \multirow{4}{*}{$\{(x_i,\{y_1,y_2,y_3,y_7 \}), i \in \lceil 4 \rfloor  \} $} & $c_{1}+y_{1}$\tabularnewline
& $c_{2} + y_{2}$ & & $c_{2} + y_{2}$ \tabularnewline
& $c_{3}+y_{3}$ & & $c_{3} + y_{3}$ \tabularnewline
& $c_{6}+c_1+c_3+y_6+y_1+y_3$ & & $c_{7}+c_1+c_2+y_7+y_1+y_2$ \tabularnewline
\hline
\multirow{ 4}{*}{$\{(x_i,\{y_1,y_2,y_4,y_5 \}), i \in \lceil 4 \rfloor  \} $} & $c_{1}+y_{1}$ & \multirow{4}{*}{$\{(x_i,\{y_1,y_2,y_4,y_6 \}), i \in \lceil 4 \rfloor  \} $} & $c_{1}+y_{1}$\tabularnewline
& $c_{2} + y_{2}$ & & $c_{2} + y_{2}$ \tabularnewline
& $c_5+c_2+c_4+y_5+y_2+y_4$ & & $c_{6} + y_{6}+c_1+y_1+c_4+y_4$ \tabularnewline
& $c_{4}+y_{4}$ & & $c_4+y_4$ \tabularnewline
\hline
\multirow{ 4}{*}{$\{(x_i,\{y_1,y_2,y_5,y_7 \}), i \in \lceil 4 \rfloor  \} $} & $c_{1}+y_{1}$ & \multirow{4}{*}{$\{(x_i,\{y_1,y_2,y_6,y_7 \}), i \in \lceil 4 \rfloor  \} $} & $c_{1}+y_{1}$\tabularnewline
& $c_{2} + y_{2}$ & & $c_{2} + y_{2}$ \tabularnewline
& $c_5+y_5+c_7+y_7+c_1+y_1$ & & $c_{6} + y_{6}+c_7+y_7+c_2+y_2$ \tabularnewline
& $c_{7}+y_{7}+c_1+y_1+c_2+y_2$ & & $c_{7}+y_{7}+c_1+y_1+c_2+y_2$ \tabularnewline
\hline
\multirow{ 4}{*}{$\{(x_i,\{y_1,y_3,y_4,y_5 \}), i \in \lceil 4 \rfloor  \} $} & $c_{1}+y_{1}$ & \multirow{4}{*}{$\{(x_i,\{y_1,y_3,y_4,y_7 \}), i \in \lceil 4 \rfloor  \} $} & $c_{1}+y_{1}$\tabularnewline
& $c_{5} + y_{5}+c_3+y_3+c_4+y_4$ & & $c_{7} + y_{7}+c_1+y_1+c_4+y_4$ \tabularnewline
& $c_{3} + y_{3}$ & & $c_3+y_3$ \tabularnewline
& $c_4+y_4$ & & $c_4+y_4$ \tabularnewline
\hline
\multirow{ 4}{*}{$\{(x_i,\{y_1,y_3,y_5,y_6 \}), i \in \lceil 4 \rfloor  \} $} & $c_{1}+y_{1}$ & \multirow{4}{*}{$\{(x_i,\{y_1,y_3,y_6,y_7 \}), i \in \lceil 4 \rfloor  \} $} & $c_{1}+y_{1}$\tabularnewline
& $c_{5} + y_{5}+c_6+y_6+c_1+y_1$ & & $c_{7} + y_{7}+c_6+y_6+c_3+y_3$ \tabularnewline
& $c_{3} + y_{3}$ & & $c_3+y_3$ \tabularnewline
& $c_6+y_6+c_1+y_1+c_3+y_3$ & & $c_6+y_6+c_1+y_1+c_3+y_3$ \tabularnewline
\hline
\multirow{ 4}{*}{$\{(x_i,\{y_1,y_4,y_5,y_6 \}), i \in \lceil 4 \rfloor  \} $} & $c_{1}+y_{1}$ & \multirow{4}{*}{$\{(x_i,\{y_1,y_4,y_5,y_7 \}), i \in \lceil 4 \rfloor  \} $} & $c_{1}+y_{1}$\tabularnewline
& $c_{5} + y_{5}+c_6+y_6+c_1+y_1$ & & $c_{7} + y_{7}+c_1+y_1+c_4+y_4$ \tabularnewline
& $c_6+y_6+c_1+y_1+c_4+y_4$ & & $c_7+y_7+c_5+y_5+c_1+y_1$ \tabularnewline
& $c_4+y_4$ & & $c_4+y_4$ \tabularnewline
\hline
\multirow{ 4}{*}{$\{(x_i,\{y_1,y_4,y_6,y_7 \}), i \in \lceil 4 \rfloor  \} $} & $c_{1}+y_{1}$ & \multirow{4}{*}{$\{(x_i,\{y_1,y_5,y_6,y_7 \}), i \in \lceil 4 \rfloor  \} $} & $c_{1}+y_{1}$\tabularnewline
& $c_7 + y_7 +c_4+y_4+c_1+y_1$ & & $c_6 + y_6+c_5+y_5+c_1+y_1$ \tabularnewline
& $c_6+y_6+c_1+y_1+c_4+y_4$ & & $c_7+y_7+c_5+y_5+c_1+y_1$ \tabularnewline
& $c_4+y_4$ & & $c_7+y_7+c_6+y_6+c_5+y_5$ \tabularnewline
\hline
\multirow{ 4}{*}{$\{(x_i,\{y_2,y_3,y_4,y_6 \}), i \in \lceil 4 \rfloor  \} $} & $c_{6}+y_{6}+c_3+y_3+c_4+y_4$ & \multirow{4}{*}{$\{(x_i,\{y_2,y_3,y_4,y_7 \}), i \in \lceil 4 \rfloor  \} $} & $c_7+y_7+c_2+y_2+c_4+y_2$\tabularnewline
& $c_2+y_2$ & & $c_2+y_2$ \tabularnewline
& $c_3+y_3$ & & $c_3+y_3$ \tabularnewline
& $c_4+y_4$ & & $c_4+y_4$ \tabularnewline
\hline
\multirow{ 4}{*}{$\{(x_i,\{y_2,y_3,y_5,y_6 \}), i \in \lceil 4 \rfloor  \} $} & $c_6+y_6+c_5+y_5+c_2+y_2$ & \multirow{4}{*}{$\{(x_i,\{y_2,y_3,y_5,y_7 \}), i \in \lceil 4 \rfloor  \} $} & $c_7+y_7+c_5+y_5+c_3+y_3$\tabularnewline
& $c_2+y_2$ & & $c_2+y_2$ \tabularnewline
& $c_3+y_3$ & & $c_3+y_3$ \tabularnewline
& $c_5+y_5+c_3+y_3+c_2+y_2$ & & $c_5+y_5+c_3+y_3+c_2+y_2$ \tabularnewline
\hline
\multirow{ 4}{*}{$\{(x_i,\{y_2,y_4,y_5,y_6 \}), i \in \lceil 4 \rfloor  \} $} & $c_6+y_6+c_5+y_5+c_2+y_2$ & \multirow{4}{*}{$\{(x_i,\{y_2,y_4,y_5,y_7 \}), i \in \lceil 4 \rfloor  \} $} & $c_7+y_7+c_4+y_4+c_2+y_2$\tabularnewline
& $c_2+y_2$ & & $c_2+y_2$ \tabularnewline
& $c_5+y_5+c_4+y_4+c_2+y_2$ & & $c_5+y_5+c_4+y_4+c_2+y_2$ \tabularnewline
& $c_4+y_4$ & & $c_4+y_4$ \tabularnewline
\hline
\multirow{ 4}{*}{$\{(x_i,\{y_2,y_4,y_6,y_7 \}), i \in \lceil 4 \rfloor  \} $} & $c_7+y_7+c_4+y_4+c_2+y_2$ & \multirow{4}{*}{$\{(x_i,\{y_2,y_5,y_6,y_7 \}), i \in \lceil 4 \rfloor  \} $} & $c_6+y_6+c_5+y_5+c_2+y_2$\tabularnewline
& $c_2+y_2$ & & $c_2+y_2$ \tabularnewline
& $c_7+y_7+c_6+y_6+c_2+y_2$ & & $c_7+y_7+c_6+y_6+c_2+y_2$ \tabularnewline
& $c_4+y_4$ & & $c_7+y_7+c_6+y_6+c_5+y_5$ \tabularnewline
\hline
\multirow{ 4}{*}{$\{(x_i,\{y_3,y_4,y_5,y_6 \}), i \in \lceil 4 \rfloor  \} $} & $c_6+y_6+c_4+y_4+c_3+y_3$ & \multirow{4}{*}{$\{(x_i,\{y_3,y_4,y_5,y_7 \}), i \in \lceil 4 \rfloor  \} $} & $c_7+y_7+c_5+y_5+c_3+y_3$\tabularnewline
& $c_5+y_5+c_4+y_4+c_3+y_3$ & & $c_5+y_5+c_4+y_4+c_3+y_3$ \tabularnewline
& $c_3+y_3$ & & $c_3+y_3$ \tabularnewline
& $c_4+y_4$ & & $c_4+y_4$ \tabularnewline
\hline
\multirow{ 4}{*}{$\{(x_i,\{y_3,y_4,y_6,y_7 \}), i \in \lceil 4 \rfloor  \} $} & $c_6+y_6+c_4+y_4+c_3+y_3$ & \multirow{4}{*}{$\{(x_i,\{y_3,y_5,y_6,y_7 \}), i \in \lceil 4 \rfloor  \} $} & $c_7+y_7+c_5+y_5+c_3+y_3$\tabularnewline
& $c_7+y_7+c_6+y_6+c_3+y_3$ & & $c_7+y_7+c_6+y_6+c_3+y_3$ \tabularnewline
& $c_3+y_3$ & & $c_3+y_3$ \tabularnewline
& $c_4+y_4$ & & $c_7+y_7+c_6+y_6+c_5+y_5$ \tabularnewline
\hline
\end{tabular}
\label{Tab:DecodeHamming1}
\end{table*}

\begin{table*}[ht]
\centering{}
\caption{Decoding procedure for receivers in $R_{2}$ of Example \ref{Eg:Hamming}.}
\begin{tabular}{|c|c|c|c|}
\hline
Circuits & Receivers & Decoding Process \tabularnewline
\hline
\multirow{ 4}{*}{ $\lbrace y_1,y_2,y_4,y_7 \rbrace  $ } & $\{ (y_1,\{y_2+y_4+y_7 \})$ & $y_7+y_4+y_2+c_7+c_4+c_2+c_1$\tabularnewline
& $\{ (y_2,\{y_1+y_4+y_7 \})$ &  $y_7+y_4+y_1+c_7+c_4+c_1+c_2$ \tabularnewline
& $\{ (y_4,\{y_1+y_2+y_7 \})$ &  $y_1+y_2+y_7+c_7+c_4+c_2+c_1$ \tabularnewline
& $\{ (y_7,\{y_1+y_2+y_4 \})$ &  $y_1+y_2+y_4+c_7+c_4+c_2+c_1$ \tabularnewline
\hline
\multirow{ 4}{*}{ $\lbrace y_1,y_2,y_5,y_6 \rbrace  $ } & $\{ (y_1,\{y_2+y_5+y_6 \})$ & $y_2+y_5+y_6+c_6+c_5+c_2+c_1$\tabularnewline
& $\{ (y_2,\{y_1+y_5+y_6 \})$ &  $y_1+y_5+y_6+c_6+c_5+c_2+c_1$ \tabularnewline
& $\{ (y_5,\{y_1+y_2+y_6 \})$ &  $y_1+y_2+y_6+c_6+c_5+c_2+c_1$ \tabularnewline
& $\{ (y_6,\{y_1+y_2+y_5 \})$ &  $y_1+y_2+y_5+c_6+c_5+c_2+c_1$ \tabularnewline
\hline
\multirow{ 4}{*}{ $\lbrace y_1,y_3,y_4,y_6 \rbrace  $ } & $\{ (y_1,\{y_3+y_4+y_6 \})$ & $y_3+y_4+y_6+c_6+c_4+c_3+c_1$\tabularnewline
& $\{ (y_3,\{y_1+y_4+y_6 \})$ &  $y_1+y_4+y_6+c_6+c_4+c_3+c_1$ \tabularnewline
& $\{ (y_4,\{y_1+y_3+y_6 \})$ &  $y_1+y_3+y_6+c_6+c_4+c_3+c_1$ \tabularnewline
& $\{ (y_6,\{y_1+y_3+y_4 \})$ &  $y_1+y_3+y_4+c_6+c_4+c_3+c_1$ \tabularnewline
\hline
\multirow{ 4}{*}{ $\lbrace y_1,y_3,y_5,y_7 \rbrace  $ } & $\{ (y_1,\{y_3+y_5+y_7 \})$ & $y_3+y_5+y_7+c_7+c_5+c_3+c_1$\tabularnewline
& $\{ (y_3,\{y_1+y_5+y_7 \})$ &  $y_1+y_5+y_7+c_7+c_5+c_3+c_1$ \tabularnewline
& $\{ (y_5,\{y_1+y_3+y_7 \})$ &  $y_3+y_3+y_7+c_7+c_5+c_3+c_1$ \tabularnewline
& $\{ (y_7,\{y_1+y_3+y_5 \})$ &  $y_1+y_3+y_5+c_7+c_5+c_3+c_1$ \tabularnewline
\hline
\multirow{ 4}{*}{ $\lbrace y_2,y_3,y_4,y_5 \rbrace  $ } & $\{ (y_2,\{y_3+y_4+y_5 \})$ & $y_3+y_4+y_5+c_5+c_4+c_3+c_2$\tabularnewline
& $\{ (y_3,\{y_2+y_4+y_5 \})$ &  $y_2+y_4+y_5+c_5+c_4+c_3+c_2$ \tabularnewline
& $\{ (y_4,\{y_2+y_3+y_5 \})$ &  $y_2+y_3+y_5+c_5+c_4+c_3+c_2$ \tabularnewline
& $\{ (y_5,\{y_2+y_3+y_4 \})$ &  $y_2+y_3+y_4+c_5+c_4+c_3+c_2$ \tabularnewline
\hline
\multirow{ 4}{*}{ $\lbrace y_2,y_3,y_6,y_7 \rbrace  $ } & $\{ (y_2,\{y_3+y_6+y_7 \})$ & $y_2+y_6+y_7+c_7+c_6+c_3+c_2$\tabularnewline
& $\{ (y_3,\{y_2+y_6+y_7 \})$ &  $y_2+y_4+y_5+c_5+c_4+c_3+c_2$ \tabularnewline
& $\{ (y_6,\{y_2+y_3+y_7 \})$ &  $y_2+y_3+y_5+c_5+c_4+c_3+c_2$ \tabularnewline
& $\{ (y_7,\{y_2+y_3+y_6 \})$ &  $y_2+y_3+y_4+c_5+c_4+c_3+c_2$ \tabularnewline
\hline
\multirow{ 4}{*}{ $\lbrace y_4,y_5,y_6,y_7 \rbrace  $ } & $\{ (y_4,\{y_5+y_6+y_7 \})$ & $y_5+y_6+y_7+c_7+c_6+c_5+c_4$\tabularnewline
& $\{ (y_5,\{y_4+y_6+y_7 \})$ &  $y_4+y_6+y_7+c_7+c_6+c_5+c_4$ \tabularnewline
& $\{ (y_6,\{y_4+y_5+y_7 \})$ &  $y_4+y_5+y_7+c_7+c_6+c_5+c_4$ \tabularnewline
& $\{ (y_7,\{y_4+y_5+y_6 \})$ &  $y_4+y_5+y_6+c_7+c_6+c_5+c_4$ \tabularnewline
\hline
\end{tabular}
\label{Tab:DecodeHamming2}
\end{table*}

\begin{table}[ht]
\centering{}
\caption{Decoding procedure for receivers in $R_{3}$ of Example \ref{Eg:Hamming}.}
\begin{tabular}{|c|c|}
\hline
Receivers in $R_{3}$& Decoding Procedure \tabularnewline
\hline
\multirow{ 7}{*}{$\lbrace (y_{i}, \lbrace x_{1},x_{2},x_3,x_4 \rbrace) i \in \lceil 7 \rfloor \rbrace  $}  & $y_1=x_1+c_1$ \tabularnewline
& $y_2=x_2+c_2$ \tabularnewline
& $y_3=x_3+c_3$ \tabularnewline
& $y_4=x_4+c_4$ \tabularnewline
& $ y_5=x_2+x_3+x_4+c_5$ \tabularnewline
& $ y_6=x_1+x_3+x_4+c_6$ \tabularnewline
& $y_7=x_1+x_2+x_4+c_7$ \tabularnewline
\hline
\end{tabular}
\label{Tab:DecodeHamming3}
\end{table}

\end{example}

%
%
%

\section{Conclusion}
\label{Sec:Conclusion}
In this work we establish few  connections between generalized index coding and discrete polymatroids. It is shown that the existence of a linear solution for a generalized index coding problem is connected to the existence of a representable discrete polymatroid satisfying certain conditions determined by the generalized index coding problem. From a discrete polymatroid a corresponding generalized index coding problem is constructed and it was shown that a representation to the discrete polymatroid exists if a perfect vector linear solution exists for the generalized index coding problem. An example is provided in the paper to illustrate that the converse of the above result is not true. When a similar generalized index coding problem is constructed from the matroids we show that a binary representation to the matroid exists if and only if the constructed index coding problem has a binary scalar linear solution. The connection is helpful in determining whether the index coding problem has a perfect binary scalar linear solution. 

The results of this paper could be extended in the following directions. The construction explained in Section \ref{Sec:GICfromDPM} is general and can be applied to any discrete polymatroid. A generalized index coding problem can be constructed from a non representable discrete polymatroid and further connections could be explored. Also for the constructed index coding problem, certain receivers (belonging to the set $R_2$) possesses the sum of certain elements as its \textit{Has-set}. The elements of the \textit{Has-set} can be made into any other linear combinations and further study could be done. Similar extensions can be considered to Theorem \ref{th:GICfromMat}. Connections between the matroids representable over non binary field and the generalized index coding problems constructed out of those matroids could also be explored.

\section*{Acknowledgement}
This work was supported partly by the Science and Engineering Research Board (SERB) of Department of Science and Technology (DST), Government of India, through J.C. Bose National Fellowship to B. Sundar Rajan.

\end{document}